\newcommand{\mt}{\ensuremath{\mathrm{MT}}}
\newcommand{\W}{\ensuremath{{W}}}  
\newcommand{\D}{\ensuremath{{D}}} 
\newcommand{\id}{\ensuremath{\mathrm{ID}}}
\newcommand{\etal}{et~al.}%
\newif\ifcomments \commentsfalse
\newcommand\defineComment[3]{%
\ifcomments%
\expandafter\newcommand\csname #1\endcsname[1]{{\leavevmode\color#2{#3##1}}}%
\else%
\expandafter\newcommand\csname #1\endcsname[1]{\ignorespaces}%
\fi
}
\newcommand\definePersistentComment[3]{%
\ifcomments%
\expandafter\newcommand\csname #1\endcsname[1]{{\leavevmode\color#2{#3##1}}}%
\else%
\expandafter\newcommand\csname #1\endcsname[1]{{\leavevmode\color#2{#3##1}}}
\fi
}
\begin{document}
\def\makeheadbox{{%
\hbox to0pt{\vbox{\baselineskip=10dd\hrule\hbox
to\hsize{\vrule\kern3pt\vbox{\kern3pt
\hbox{ This is a pre-print of an article published in \emph{Biological Cybernetics}. The final authenticated version is available online}
\hbox{with the following DOI: https://doi.org/10.1007/s00422-020-00853-7}
\kern3pt}\hfil\kern3pt\vrule}\hrule}%
\hss}}}

\title{A Feedback Information-Theoretic Transmission Scheme (FITTS) for Modeling Trajectory Variability in Aimed Movements}


\author{Julien Gori         \and
        Olivier Rioul 
}


\institute{Julien Gori \at
              LRI,\\
              Universit\'{e} Paris-Saclay, CNRS, Inria, \\
              F-91400, Orsay, France.\\
              \email{juliengori@gmail.com}           
           \and
           Olivier Rioul \at
              LTCI, T\'{e}l\'{e}com Paris, \\
              Institut Polytechnique de Paris, \\
              F-91120, Palaiseau, France. \\
              \email{olivier.rioul@telecom-paris.fr}
}

\date{Received: date / Accepted: date}

\maketitle

\begin{abstract}
%
Trajectories in human aimed movements are inherently variable. 
Using the concept of positional variance profiles, such trajectories are shown to be decomposable into two phases: 
In a first phase, the variance of the limb position over many trajectories increases rapidly; in a second phase, it then decreases steadily. 
A new theoretical model, where the aiming task is seen as a Shannon-like communication problem, is developed to describe the second phase: 
Information is transmitted from a “source” (determined by the position at the end of the first phase), to a “destination” (the movement’s end- point) over a “channel” perturbed by Gaussian noise, with the presence of a noiseless feedback link. 
Information-theoretic considerations show that the positional variance decreases exponentially with a rate equal to the channel capacity C. 
Two existing datasets for simple pointing tasks are re-analyzed and observations on real data confirm our model. 
The first phase has constant duration and C is found constant across instructions and task parameters, which thus characterizes the participant’s performance. 
Our model provides a clear understanding of the speed-accuracy tradeoff in aimed movements: 
Since the participant’s capacity is fixed, a higher prescribed accuracy necessarily requires a longer second phase resulting in an increased overall movement time. The well-known Fitts’ law is also recovered using this approach.

\keywords{Fitts' law \and Speed-accuracy tradeoff \and Information theory \and Feedback  \and Variance \and Movement \and Motor control}
\end{abstract}

\section{Introduction}
\label{sec:intro}
It has long been observed that people routinely adapt their speed to perform aimed movements in a reliable manner: The more accurate a movement the slower its execution.
This so-called speed-accuracy tradeoff has been studied for more than a century by many communities such as experimental psychology~\cite{woodworth1899, fitts1954, welford1960, crossman1983, meyer1988, crossman1957}, human-computer interaction (HCI)~\cite{card1978, soukoreff2004}, cybernetics~\cite{chan1990a, gawthrop2011}, robotics~\cite{simmons2005},  and neuroscience~\cite{flanagan1995, khan2006}.


In 1954, Fitts~\cite{fitts1954} provided a simple formula to describe the speed-accuracy tradeoff of simple aimed movements (Fitts' law, see \S~\ref{sec:background}), focusing on the variability of the movement \emph{endpoints}.
Fitts' law remains to this day heavily used in various communities, e.g., in HCI to evaluate the performance of input devices~\cite{soukoreff2004}; unfortunately it is only a rule of thumb, initially conceived with a vague analogy with information theory~\cite{gori2018}, that only deals with movement endpoints and says little about the entire movement's trajectory.

Since Fitts' seminal 1954 work, many attempts have been pursued to explain Fitts' law as resulting from an underlying mechanism for movement production and control.
Examples are neural dynamics models~\cite{bullock1988}, behavioral models~\cite{elliott2017, elliott2010}, various mathematically formulated models~\cite{plamondon1997, meyer1988} and models based on control theory, including simple closed-loop step responses~\cite{mueller2017}, deterministic and stochastic optimal control~\cite{todorov2002, tanaka2006, harris1998, flash1985, guigon2007, berret2016}.
This last class of models has been particularly influential in explaining motor planning~\cite{todorov2002}. It shows how humans choose one particular path from the infinitely many possible~\cite{flash1985} (this is usually known as the joint-redundancy or degree of freedom problem~\cite{rosenbaum2009,bullock1988}), and how and when they will correct deviations from that particular path~\cite{todorov2002}.

Control theoretic models do suffer from several difficulties. To be operational, they require estimating the impedance of the motor system, i.e. limb and muscles inertia and visco-elastic properties. However, modulating impedance might be a control strategy in its own right~\cite{hogan1985}, and even estimating an impedance that is assumed invariant is not trivial~\cite{buchanan2004, mueller2017}. Another issue is that it is not clear which cost functions are relevant to the CNS (Central Nervous System), and how they should be combined~\cite{franklin2011} in optimal control models.

In this paper, we circumvent these difficulties by using an optimal information transmission model that explains the evolution of variance over time for precise aimed movements, jointly maximizing speed and accuracy without relying on motor impedance.
We leverage a computational model rooted in information theory to show that the problem of aiming can be seen as a Shannon-like communication problem~\cite{shannon1948} with a noisy feedforward channel and a noiseless feedback link.
Using Shannon's information theory is not only a reminder of the initial rationale behind Fitts' law~\cite{fitts1954, gori2018}; it mostly comes from the observation that noise is ``the one key element limiting motor performance''~\cite{franklin2011}---and that information (communication) theory was conceived with the goal of finding ``ways of transmitting the information which are optimal in combating noise''~\cite{shannon1948}.
In spite of the claimed differences between our information-theoretic and other optimal control schemes, similarities do exist, such as the use of feedback information at the emitter, and the presence of a Minimum Mean Square Error (MMSE) estimator at the receiver. Links between control theoretic and information-theoretic formalisms have been made~\cite{elia2004, tatikonda2000}, and are out of the scope of this paper.

We then introduce \emph{positional variance profiles} (PVPs) as a way to operationalize our model and track the evolution of the variability of trajectories over time.
We remark that PVPs are necessarily \emph{unimodal}: A first distance-covering phase where positional variance increases rapidly over time is followed by a much longer second phase --- which makes reliable aiming possible. Our model explains this longer second phase of decreasing variance, while an empirical study reveals that the duration of the first phase is approximately constant.
We conclude by combining the two phases, which produces a novel derivation for Fitts' law.

The remainder of this paper is organized as follows. Behavioral observations relevant to our model are described in \S~\ref{sec:background}. The information-theoretic model for the second phase is derived in \S~\ref{sec:model}, as the result of a joint optimization of aiming accuracy and movement time. The main theoretical result is that the variance decreases exponentially over time during the second phase. PVPs are described in~\S~\ref{sec:method} and empirical evidence to support our model is provided in \S~\ref{sec:empirical_two}. An empirical study of the first phase of the PVPs with increasing variance is presented in \S~\ref{sec:pvp_results}. Finally, the synthesis of the two phases leads to the recovery of the well-known Fitts' law in \S~\ref{sec:fitts}. Section~\S~\ref{sec:conclusion} concludes.

\section{Background \label{sec:background}}
Our aim in this section is not to present an exhaustive review of motor control models; instead we present behavioral observations on the variability of movements, the use of feedback information and the continuous and discrete nature of the control of movements that are relevant to our model. 
\subsection{Variability of Movements}
Human movement is inherently variable, as recognized since the earliest studies on human produced movement~\cite{woodworth1899}. The variability of the endpoints of movement has been of great interest to behaviorists, leading to our first point of interest, the well-known empirical Fitts' law.
A second point of interest is the observation that the evolution of the variance of position over time in aiming tasks has a unimodal profile, of which an ``ideal'' is represented Fig.~\ref{fig:sig_th}.

\begin{figure}[htbp]
\centering
\includegraphics[width=.9\columnwidth]{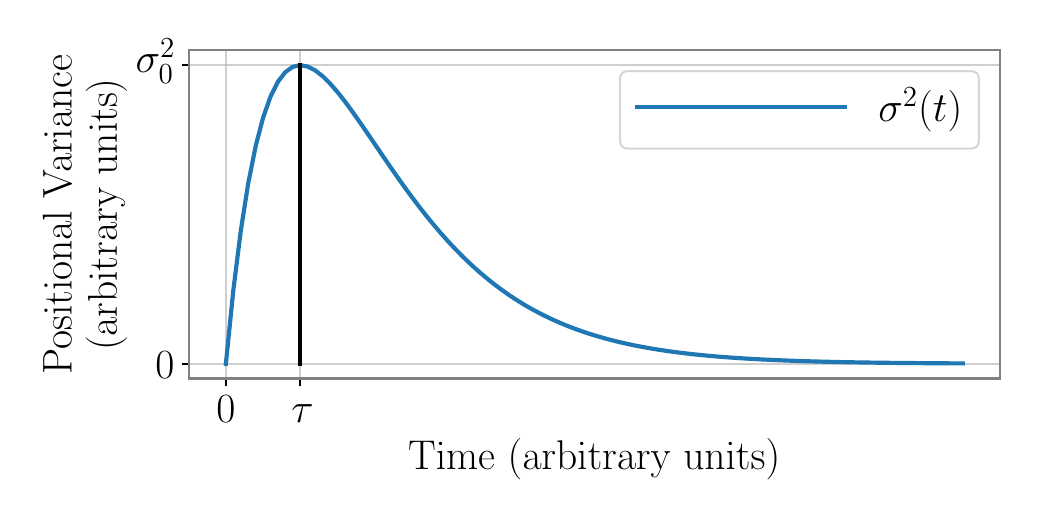} 
\caption{Ideal two-phase positional variance profile.
The transition between the two phases occurs at $(\tau;\sigma_0^2)$.}
\label{fig:sig_th}
\end{figure}

\paragraph{Fitts' law.}
Fitts' law describes the time \mt{} it takes to reach a target of size \W{} located \D{} away:
\begin{align}
\mt{} = a + b\,\log_2 \bigl(1 + \tfrac{\D{}}{\W{}} \bigr) = a + b\,\id{} \qquad \text{(seconds).} \label{eq:fitts}
\end{align}
$\id{} = \log_2 \Bigl(1 + \frac{\D{}}{\W{}} \Bigr)$ is known as the index of difficulty. There is, at present, no way to predict values for $a$ (the intercept, in seconds) and $b$ (the slope, in seconds per bit), which therefore have to be estimated from empirical data~\cite{soukoreff2004, gori2018}. 
The inverse of the slope ($1/b$, in bits per second) is often called throughput~\cite{zhai2004b} and can be interpreted as a measure of the information-theoretic ``capacity'' of the human motor system i.e. the highest rate of information that is transfered during the task completion.

Fitts' law was first stated by Fitts~\cite{fitts1954} as a vague analogy with the Shannon capacity formula~\cite{shannon1948} ---that analogy having been questioned on many occasions~\cite{sheridan1974, gori2018}.
Due to the informal nature of Fitts' theoretical construct, many variations of the law exist\footnote{For example, at least a dozen different formulations for \id{} exist~\cite{plamondon1997}, including Fitts' original $\id{} = \log_2 \bigl( \frac{2\D{}}{\W{}} \bigr)$.} and most of the practices surrounding Fitts' law are empirically guided. 

A variation that is used e.g., in HCI~\cite{soukoreff2004, gori2018}, known as the effective Fitts' law~\cite{zhai2004a}, corrects for participants that miss targets over the course of the experiment, replacing \W{} with the actual standard deviation of endpoints $\sigma$, estimated from the data:
\begin{align}
\mt{} = a + b\,\log_2 \bigl(1 + \tfrac{\D{}}{4.133\sigma}\bigr) = a + b\,\id{}_e \label{eq:fitts_mackenzie}
\end{align}
where $\id{}_e = \log_2 \left(1 + \frac{\D{}}{4.133\sigma}\right)$ is known as the effective index of difficulty.
This formulation is however based on the heuristic that an $\varepsilon = 3.88\%$ miss rate should be pursued and lacks a satisfying justification~\cite{gori2018}.

\paragraph{Unimodal Positional Variance.}
As noticed in~\cite{todorov2002}, trajectory variability has ``surprisingly received little attention from researchers'': Most studies investigating Fitts' law have been limited to the measure of the spread of endpoints~\cite{crossman1957, soukoreff2004} or single movements~\cite{meyer1988, plamondon1997}. 
However, that humans are able to reach almost any target reliably implies that they are able to reduce their endpoint variability at will, which begs the question of how exactly that variability is reduced throughout the trajectory.

The evolution in time of the standard deviation of trajectories from a tapping task was evaluated in~\cite{lai2005} and appeared \emph{unimodal} (an increasing phase was followed by a decreasing phase).
Other studies~\cite{khan2006,vandermeulen1990} have represented positional variance at specific kinematic markers (peak acceleration, peak velocity, peak deceleration, movement time), which also suggests unimodal variance profiles.
Finally, Gutman~\etal{}~\cite{gutman1992, gutman1993} proposed a feedforward model where a primitive trajectory is rescaled in time and space depending on the task. They found their model predicted unimodal variance profiles, in line with empirical observations. \footnote{They did find bimodal profiles, and so did Darling~\etal{}~\cite{darling1987a} in another study, on fast elbow only (single-joint) flexions. This differs from our work which tackles multi-joint movements).}
Thus, evidence suggests that the positional variance has a unimodal profile in time; this was also verified in our empirical analysis (see \S~\ref{sec:pvp_results}).

The model that we propose accounts for the variance decreasing phase of movement: we namely show that the variance, in the decreasing phase, does so at a characteristic exponential rate.

\subsection{Feedforward and Feedback Information}
It is well known that humans cannot function properly without feedback, see e.g. Wiener's account~\cite[p.95]{wiener1948} on two patients suffering from a lack of voluntary muscle movement coordination known as \textit{ataxia} caused by a lack of proprioceptive feedback information at the brain. Another illustration comes from the fact that older adults have a diminished sense of proprioception which seems to account for a decrease in their performance compared with younger adults~\cite{ketcham2004}. 
Movement generation also relies on visual feedback mechanisms: performing precise movements with closed eyes is near impossible, and various experiments on occlusion and removal of light~\cite{elliott1995, zelaznik1983} or removal of cursor~\cite{chua1993} confirm an effect of visual feedback on virtually all kinematic properties, including accuracy and movement time, on movements as short as 100~ms~\cite{elliott2010,carlton1992,zelaznik1983}. 

However, many works also indicate feedforward control~\cite{mehta2002}, in particular those with deafferented monkeys~\cite{polit1978, polit1979, bizzi1984}.
Incorporating both feedforward and feedback mechanisms is thus necessary when designing a scheme to explain aimed movement production~\cite{wolpert1997}. 
This was already recognized in earliest descriptions of aimed movements by Woodworth~\cite{woodworth1899}, who hypothesized that an aimed movement is constituted of two serial components: an initial adjustment, whose main purpose is to cover distance, followed by a homing-in phase that relies on vision to ensure accuracy. 
This two-component model has been recently refined under the impulse of Elliott and colleagues~\cite{elliott2001, elliott2010, elliott2017}:
A first \emph{planned} component gets the limb close to the target area. This component is based on internal representations and is associated to a velocity regulation through proprioceptive feedback. Typically high speeds and low jerk trajectories are associated with this component.
When time permits, a second corrective portion is engaged to reduce spatial discrepancy between limb and target. This process, highly dependent on foveal (central) vision, involves computing the difference between limb and target position and issuing discrete corrections. 

\textcolor{black}{Our model incorporates feedback and feedforward mechanisms via a feedback and a feedfoward channel. We will also find many resemblances between the two aforementioned phases of variance and the two components of Woodworth \& Elliott.}

\subsection{Continuous, Discrete, and Intermittent control}
Woodworth~\cite{woodworth1899} was the first to suggest that aiming was comprised of a first distance-covering phase followed by a second homing-in phase, but the segmentation was never explicitly performed, including in its modern exposition~\cite{elliott2001, elliott2010}.
It was later observed that the velocity during movement vanished multiple times before its end, which was interpreted as the transition from one submovement to another---the whole movement being composed of concatenated submovements---which implies a discrete type of control~\cite{crossman1983, keele1968, meyer1988}.
\textcolor{black}{Several models have used this idea of submovements, including Crossman and Goodeve's~\cite{crossman1983} deterministic iterative control model, and its stochastic counterpart due to Meyer~\etal{}~\cite{meyer1988}, which have proved quite popular.}

There are however many reports of improvement of overall performance metrics in the presence of feedback, even though no distinct changes occur in the kinematic profiles~\cite{pelisson1986,desmurget2000, elliott1991}. This would indicate continuous control, and many successful optimal control models are indeed continuous.
Yet another form of control is intermittent control, where e.g. observations are continuous, but actions are taken intermittently~\cite{gawthrop2011}.

It has been suggested that training and learning leads to the transition from discrete or intermittent control to continuous control~\cite{woodworth1899,proteau1987,elliott2010}. It has also been suggested that intermittent control can appear to be continuous---so-called masquerading~\cite{gawthrop2011}. This can lead to difficulties with models that form predictions that are based on intermittent/discrete style control strategies, e.g. if the quantity (e.g. amplitude, frequency) or quality of submovements are predicted. Our use of variance profiles, that relies on statistical averaging, allows us to form predictions regardless of the actual type of control.

\section{Variance-Decreasing Phase: \added{Theoretical} Model}
\label{sec:model}
Contrary to most models of movement, where initial position is given (e.g., the optimal control theoretic framework~\cite{todorov2002, tanaka2006, harris1998, flash1985, guigon2007, berret2016}), the input to our model is a \emph{random} variable, corresponding to the position when positional variance is maximized (time $t = \tau$ in Fig.~\ref{fig:sig_th}). In this section, some theorems and lemmas are given. All full proofs are delayed to Appendix~\ref{app:proofs}.

\begin{figure*}
\centering
\begin{tikzpicture}[scale=.9]
\draw[->] (1,0) -- (2,0);
\draw (1,0) node[above]{$\mathbf{A}$};
\draw (2,1) rectangle (6,-1);
\draw (4,0) node[below]{Brain};
\draw (4,0) node[above]{ $\mathbf{f(A,\widehat{A}^{i-1})}$};

\draw[->] (6,0) -- (8.5,0);
\draw (7,0) node[above]{$\mathbf{X_i}$};
\draw (9,0) circle(.5);
\draw (9,0) node{\Large $+$};
\draw[->] (9,1) -- (9,.5);
\draw (9,1) node[right]{$\mathbf{Z_i}$};
\draw[->] (9.5,0) -- (12,0);
\draw (11,0) node[above]{$\mathbf{Y_i}$};

\draw[->] (11,-1.5) -- (4,-1.5) -- (4,-1);
\draw (4,-1.5) node[left]{$\mathbf{\widehat{A}_{i-1}}$};
\draw[->] (15,0) -- (15,-1.5) -- (11,-1.5);
\draw[fill = white] (8,-1.15) rectangle (11,-1.75);
\draw (9.5,-1.5) node{Sample Delay $T$};
\draw (12,1) rectangle (14,-1);
\draw (13,.5) node[]{$\mathbf{g(Y^i)}$};
\draw (13,-.2) node[below]{Muscles};
\draw[->] (14,0)--(16,0);
\draw (15,0) node[above]{$\mathbf{\widehat{A}_i}$};
\end{tikzpicture}
\\

\begin{tikzpicture}[scale = .66]
\draw[->] (1,0)--(3,0);
\draw (1.5,0) node[above]{$\mathbf{A}$};
\draw (3.5,0) circle(.5);
\draw (3.5,0) node{$\Sigma$};
\draw[dotted, black, ultra thick] (2.9,.8) -- (2.9,-.6) -- (6.2,-.6) -- (6.2,.8) -- cycle;
\draw (4.55,.8) node[above]{$\mathbf{f}$};
\draw (4.6,0) node[above]{$\mathbf{A}_i$};
\draw (3.5,-.65) node[below right]{$\mathbf{\widehat{A}_{i-1}}$};
\draw (3.5,-.5) node[below left]{$-$};
\draw (7,0) node[above ]{$\mathbf{X}_i$};
\draw (3,0) node[above left]{$+$};
\draw[->] (4,0) --(7.5,0);
\draw (8,0) circle(.5);
\draw (8,0) node{$\Sigma$};
\draw (7.5,0) node[below left]{$+$};
\draw[->] (8,1.5)--(8,.5);
\draw (8,.5) node[above left]{$+$};
\draw (8,1.5) node[right]{$\mathbf{Z}_i$};
\draw[->] (8.5,0)--(12,0);
\draw (12,0) node[above]{$\mathbf{Y}_i$};
\draw[->] (11,0) -- (11 ,-2) -- (9,-2);
\draw (8.5,-2) circle (.5);
\draw (8.5,-2) node{$\Sigma$};
\draw (9,-2) node[above right]{$+$};
\draw (8.5,-2.5) node[below right]{$+$};
\draw (8,-2) -- (6,-2);
\draw (6,-1.5) rectangle (4.5,-2.5);
\draw[->] (5.25,-1)--(5.25,-1.5);
\draw (5.25,-1.25) node[right]{$T$};
\draw (5.25,-2) node{delay};
\draw[->] (4.5,-2) -- (3.5,-2) -- (3.5,-.5);
\draw[->] (3.5,-2) -- (3.5,-3.5) --(8.5,-3.5)--(8.5,-2.5);
\draw (7,-2) node[above]{$\widehat{\mathbf{A}}_i$};
\draw (11,-2) node[below]{$\mathbb{E}[\mathbf{A}_i|\mathbf{Y}_i]$};
\draw[fill=white] (9,-.5) rectangle (13,-1.5);
\draw (11,-1) node{$(\alpha_i (1+N/P))^{-1}$};

\draw[dotted, black, ultra thick] (13.2,-.3) -- (8.8,-.3) -- (8.8, -1.4) -- (7.8, -1.4) -- (7.8,-3.6) -- (13.2,-3.6) -- cycle;
\draw (13.2,-3.6) node[above left]{$\mathbf{g}$};
\draw[fill=white] (5,.5)rectangle(6,-.5);
\draw (5.5,0) node{$\alpha_i$};

\begin{scope}[xshift = 13cm]
\draw[->] (1,0)--(3,0);
\draw (1.5,0) node[above]{$\mathbf{A}$};
\draw (3.5,0) circle(.5);
\draw (3.5,0) node{$\Sigma$};
\draw (4.6,0) node[above]{$\mathbf{A}_{i}$};
\draw[dotted, ultra thick] (2.9,.6) rectangle (4.1, -.6);
\draw (2.9,.6) node[above right]{Eye};
\draw (3.5,-.65) node[below right]{$\widehat{\mathbf{A}}_{i-1}$};
\draw (3.5,-.5) node[below left]{$-$};
\draw (6.5,0) node[above ]{$\mathbf{X}_i$};
\draw (3,0) node[above left]{$+$};
\draw[->] (4,0) --(7,0);
\draw (7.5,0) circle(.5);
\draw (7.5,0) node{$\Sigma$};
\draw (7,0) node[below left]{$+$};
\draw[->] (7.5,1.5)--(7.5,.5);
\draw (7.5,.5) node[above left]{$+$};
\draw (7.5,1.5) node[right]{$\mathbf{Z}_i$};
\draw[->] (8,0)--(12,0);
\draw (12,0) node[above]{$\mathbf{Y}_i$};
\draw[dotted, ultra thick] (6.1,1.8) rectangle (13.2,-1);
\draw(12,1.8) node[above left]{Nerves};
\draw[dotted, ultra thick] (13.2,-1) rectangle (7,-4);
\draw (12.5,-4) node[above left]{Motor organs};
\draw[dotted, ultra thick] (6.1,1.4) -- (2.3,1.4) -- (2.3,-3) -- (4, -3) -- (4,-1.4) -- (4.7, -1.4) -- (4.7, -.7) -- (6.1,-.7) -- cycle;
\draw (2.3,1.4) node[above right]{Brain};
\draw[->] (11,0) -- (11 ,-2) -- (9,-2);
\draw (8.5,-2) circle (.5);
\draw (8.5,-2) node{$\Sigma$};
\draw (9,-2) node[above right]{$+$};
\draw (8.5,-2.5) node[below right]{$+$};
\draw (8,-2) -- (6,-2);
\draw (6,-1.5) rectangle (4.5,-2.5);
\draw[->] (5.25,-1)--(5.25,-1.5);
\draw (5.25,-1.25) node[right]{$T$};
\draw (5.25,-2) node{delay};
\draw[->] (4.5,-2) -- (3.5,-2) -- (3.5,-.5);
\draw[->] (3.5,-2) -- (3.5,-3.5) --(8.5,-3.5)--(8.5,-2.5);
\draw[fill=white] (9,-.5) rectangle (13,-1.5);
\draw (11,-1) node{$(\alpha_i (1+N/P))^{-1}$};
\draw[fill=white] (5,.5)rectangle(6,-.5);
\draw (5.5,0) node{$\alpha_i$};

\draw (7.5,-2) node[above]{$\widehat{\mathbf{A}}_i$};
\draw (11,-2) node[below]{$\mathbb{E}[\mathbf{A}_i|\mathbf{Y}_i]$};
\end{scope}

\end{tikzpicture}

\caption{Top panel: Information-theoretic model for the aiming task. Bottom left panel: Implementation of the model. Bottom right panel: Physiological representation.}
\label{fig:it_model}
\end{figure*}

\subsection{Notations}
\begin{itemize}
\item bold letters $\mathbf{X}$ and $\mathbf{X}_i$ denote random variables which can be indexed by $i$ in the list $\mathbf{X}^n = (\mathbf{X}_1, \mathbf{X}_2, \dots{}, \mathbf{X}_i, \dots{}, \mathbf{X}_n)$;
\item $\mathcal{N}(\mu, \sigma^2)$ is the $\mu$-centered Gaussian distribution with variance $\sigma^2$;
\item $\mathbb{E}[\mathbf{X}]$ and $\mathbb{E}[\mathbf{X}|\mathbf{Y}]$ are respectively the mathematical expectation of $\mathbf{X}$ and the conditional expectation of $\mathbf{X}$ given $\mathbf{Y}$;
\item $H(\mathbf{X})$ and $H(\mathbf{X}|\mathbf{Y})$ are respectively the differential entropy and the differential conditional entropy, defined by $H(\mathbf{X}) = -\mathbb{E}[\log_2 p(\mathbf{X})]$ and $H(\mathbf{X}|\mathbf{Y}) = -\mathbb{E}[\log_2 p(\mathbf{X}|\mathbf{Y})]$ and are expressed in bits, were $p(\mathbf{X})$ and $p(\mathbf{X}|\mathbf{Y})$ are the probability density functions (pdf) of $\mathbf{X}$ and $\mathbf{X}|\mathbf{Y}$. An important example is the Gaussian entropy~\cite{cover2012}: If $\mathbf{X} \sim \mathcal{N}(0,\sigma^2)$, then $H(\mathbf{X}) = \frac{1}{2}\log_2(2\pi e \sigma^2)$.
\end{itemize}

\subsection{Information-Theoretic Model Description}
Our goal is to explain the reliability of human aimed movements, i.e. how the inherent variability of a set of trajectories gets reduced over time (from $\tau$, the time of the maximum of the PVP, onwards to the end of the movement). In contrast, existing models of goal-directed movements usually predicting trajectories formed between and initial position and final position.

The position of the limb extremity (`limb' in short) at the end of the first phase is \emph{random}: the distance to the target is modeled by the random variable $\mathbf{A}$.

Thanks to the feedback (visual and proprioceptive~\cite{elliott2010}), the limb position is known at the brain level. 
Due to eye-hand coordination and fast eye dynamics, the eye is usually pointing towards the target long before the end of the movement~\cite{elliott2010}. 
Hence we assume that $\mathbf{A}$ is known at the brain level and can be easily evaluated.\footnote{In fact it can be estimated by the eye if the limb is close enough to the target. This is indeed the case since most of the distance to the target has been covered during the first phase, see \S~\ref{sec:pvp_results}.}

We take it that $\mathbf{A}$ is a centered Gaussian with variance $\sigma^2_0$: $\mathbf{A} \sim \mathcal{N}(0,\sigma^2_0)$. We consider $\mathbf{A}$ to be centered, since movements can undershoot and overshoot the target. In general movements can come from any direction (left or right in this 1D model), hence an overshoot for one is an undershoot for the other, symmetrizing $\mathbf{A}$. Empirical evidence in \S~\ref{sec:empirical_two} will further support the Gaussian hypothesis. The first iteration of the scheme is now explained

\paragraph{First iteration of the scheme.}
The second phase of movement is commensurate to sending $\mathbf{A}$, the remaining algebraic distance to the target, from the brain to the limb (to be read with the top panel of Fig.~\ref{fig:it_model}):
\begin{itemize}
\item Given $\mathbf{A}$, the brain outputs an amplitude $\mathbf{X}_1$ to be sent to the limb:
\begin{align}
\mathbf{X}_1 = \mathbf{f}(\mathbf{A}),
\end{align}
where $\mathbf{f}$ is a \emph{deterministic}, yet unknown function performed by the brain, known as the \emph{encoder} in the vocabulary of information theory.
\item The variability of the human motor system is accounted for by a noisy (Gaussian) transmission from brain to limb. The output of the channel $\mathbf{Y}_1$ reads
\begin{align}
\mathbf{Y}_1 = \mathbf{X}_1 + \mathbf{Z}_1,
\end{align} 
where $\mathbf{Z}_1\sim \mathcal{N}(0,N)$ is the ``noise''.
The noise is taken as additive because the remaining distance to the target is small, hence the noise is largely independent from $\mathbf{X}_1$, see~\cite{todorov1998}. This transmission model is the well-known Additive White Gaussian Noise (AWGN) channel~\cite{shannon1948}.
\item The actual distance covered by the limb $\widehat{\mathbf{A}}_1$ is the result of some yet unknown function $\mathbf{g}$ (the \emph{decoder} in the vocabulary of information theory), applied by the motor organs to the received $\mathbf{Y}_1$:
\begin{align}
\widehat{\mathbf{A}}_1 = g(\mathbf{Y}_1).
\end{align}
\item $\widehat{\mathbf{A}}_1$ is returned to the brain \textit{via} noiseless\footnote{We assume that in stereotypical controlled experimental tasks such as those used in this paper, there are no perception issues. A pathology could however be simulated by e.g. introducing on purpose a strong deterioration of the quality of the feedback link.} feedback (e.g. visual, proprioceptive) and used together with $\mathbf{A}$ to produce a new input:  $\mathbf{X}_2 = f(\mathbf{A}, \widehat{\mathbf{A}}_1)$ --- we next show that the best strategy is to use a comparison between $\mathbf{A}$ and $\widehat{\mathbf{A}}_1$ as the subsequent signal.
\end{itemize}
The scheme then progresses iteratively, each step having a constant duration of $T$ seconds. This puts the limiting factor for time not on the dynamics of moving the limb, but on the delay associated with the feedback transmission.\footnote{It is equivalent to consider that all components have zero-delay except for the feedback link, or that the delay is spread over all components. For example, the delay taken into account for the feedback link encompasses the delay associated with the feedforward transmission.}

\paragraph{General form of the iterative scheme.}
At iteration $i \in \lbrace{ 1,\dots,n\rbrace}$, the scheme is described by the following equations (see the top panel of Fig.~\ref{fig:it_model}):
	\begin{enumerate}
		\item 
		The encoder ($\mathbf{f}$) produces $\mathbf{X}_i$ from $\mathbf{A}$ and all received feedback information $\widehat{\mathbf{A}}^{i-1} = \lbrace \widehat{\mathbf{A}}_{1},\widehat{\mathbf{A}}_{2},\dots{},\widehat{\mathbf{A}}_{i-1} \rbrace$:
		\begin{align}
		\mathbf{X}_i = f(\mathbf{A},\widehat{\mathbf{A}}^{i-1}). \label{eq:fx}
		\end{align}
		\item 
		The AWGN channel:
		\begin{align}
		\mathbf{Y}_i = \mathbf{X}_i + \mathbf{Z}_i. \label{eq:awgn}
		\end{align}
		\item 
		The output ($\widehat{\mathbf{A}}_i$) of the decoder ($\mathbf{g}$)  is a function of all received channel outputs $\mathbf{Y}^i = \lbrace \mathbf{Y}_1,\mathbf{Y}_2,\dots{},\mathbf{Y}_i \rbrace$ thus far:
		\begin{align}
		\widehat{\mathbf{A}}_i = \mathbf{g}(\mathbf{Y}^i).\label{eq:gy}
		\end{align}
	\end{enumerate}

The functions $\mathbf{f}$ and $\mathbf{g}$ are assumed to be deterministic and are causal by definition; their exact form however has yet to be specified.
To actually use all previously received signals since the start of the movement, $\mathbf{f}$ and $\mathbf{g}$ require access to some form of memory. Although this is plausible at the brain level for $\mathbf{f}$~\eqref{eq:fx}, it seems less so at the limb level for $\mathbf{g}$~\eqref{eq:gy}. We next determine $\mathbf{f}$ and $\mathbf{g}$ by solving an optimization problem, and we show that no memory is actually required: the optimal solution ---even when assuming the possibility of memorization--- is memoryless.

In Shannon's communication-theoretic terms, the aiming task in the second phase can be seen as the transmission of a real value (distance from target at the end of the first phase) from a ``source'' (CNS) to a ``destination'' (limb extremity) over a noisy forward Gaussian channel with noiseless feedback.
In human-centered terms, that phase makes sure that the limb \emph{reliably} reaches the target, once most of the distance has been covered.

\subsection{Bounds on Transmitted Information}
\label{sub:bounds}

We now leverage information-theoretic definitions.
\begin{itemize}
\item $P_i = \mathbb{E}[\mathbf{X}_i^2]$ is the average power (variance) of the channel input at iteration $i$.
\item The quadratic \emph{distortion} $D_i = \mathbb{E}[(\mathbf{A}-\widehat{\mathbf{A}}_i)^2]$ is the mean-squared error of the estimation of $\mathbf{A}$ by $\widehat{\mathbf{A}}_i$ after $i$ iterations.
\item $I(\mathbf{A};\widehat{\mathbf{A}}_i) = H(\mathbf{A}) - H(\mathbf{A}|\widehat{\mathbf{A}}_i)$ is Shannon's mutual (transmitted) information~\cite{cover2012} between $\mathbf{A}$ and $\widehat{\mathbf{A}}_i$.
\item Shannon's capacity $C$~\cite{shannon1948,cover2012} of the forward AWGN Channel (for one channel use) under the power constraint $P_i \leq P$ and with noise power $N$ is 
\begin{align}
C = \frac{1}{2}\log_2(1 + P/N) \mbox{ (bit per channel use).}\label{eq:capacity}
\end{align}
\end{itemize} 

Shannon's capacity formula is a corrolary to the channel coding theorem~\cite{shannon1948} (for AWGN channels),
which expresses a compromise between a certain measure of speed (rate of information) and a certain measure of accuracy (reliability of a transmission). Goal-directed movements, when modeled according to Fig.~\ref{fig:it_model}, are shown to entail a similar communication tradeoff through the following theorem.
\begin{theorem}
\label{thm:ineq}
Consider the noisy transmission scheme with noiseless feedback of Fig.~\ref{fig:it_model}. For a zero-mean Gaussian source $\mathbf{A}$ with variance $\sigma_0^2$, we have after $n$ iterations
\begin{align}\label{eq:opti_inequal}
\frac{1}{2}\log \frac{\sigma_0^2}{D_n} \; \underset{(a)}{\leq} \; I(\mathbf{A},\widehat{\mathbf{A}}_n) \; \underset{(b)}{\leq} \; nC.
\end{align}
\end{theorem}
\noindent 
The inequality on the left expresses the minimum amount of information that needs to be transmitted from the brain to the limb to reduce the positional variability from the initial variance ($\sigma_0^2$) to the variance after $n$ iterations ($D_n$). 
The inequality on the right expresses the maximum amount of information that can be transmitted over the $n$ times fully exploited noisy channel ($nC$).
Since the transmitted information per channel use can never exceed $C$, and since being more accurate requires sending larger amounts of information, more iterations of the scheme (hence more time) are needed for more precise tasks --- in line with the speed-accuracy tradeoff.

For a given channel capacity $C$ and a given number $n$ of iterations of the scheme, maximizing accuracy is equivalent to minimizing the trajectory variance $D_n$. Similarly, for a given accuracy $D_n$, minimizing time is equivalent to minimizing $n$. Optimal aiming, which we define by those movements that achieve the best possible accuracy in the least amount of time is thus achieved when equality holds in~\eqref{eq:opti_inequal}, i.e. when:
\begin{align}
\frac{1}{2}\log \frac{\sigma_0^2}{D_n} \; = \; I(\mathbf{A},\widehat{\mathbf{A}}_n) \quad = \quad nC. \label{eq:opti_equal}
\end{align} 
The next subsection determines the conditions under which this maximal exploitation of the channel is actually reached.

\subsection{Optimal Aiming (Achieving Capacity)}
\label{sec:achieving}
\begin{lemma}
\label{lm:Cond}
Optimal aiming can be achieved if, and only if, the following conditions hold:
\begin{enumerate}
\item \label{enum:cn:1} all considered random variables $\mathbf{A}$,$\widehat{\mathbf{A}}^i$, $\mathbf{A} - \widehat{\mathbf{A}}^i$,  $\mathbf{X}^i$, $\mathbf{Y}^i$, $\mathbf{Z}^i$ are Gaussian;
\item \label{enum:cn:2} all input powers are equal (to, say, $P$): \newline $P_i = \mathbb{E}[\mathbf{X}_i^2] = P, \quad  \forall i$;
\item \label{enum:cn:3} endpoints $\widehat{\mathbf{A}}^i$ are mutually independent;
\item \label{enum:cn:3_bis} channel outputs $\mathbf{Y}^i$ and errors $\mathbf{A} - \widehat{\mathbf{A}}^i$ are independent;
\item \label{enum:cn:4} $\widehat{\mathbf{A}}_i = \mathbf{g}(\mathbf{Y}^i)$ is a sufficient statistic of $\mathbf{Y}^i$ for $\mathbf{A}$. 
\end{enumerate}
\end{lemma}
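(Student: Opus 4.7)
The plan is to inspect the chain of inequalities used in the proof of Theorem~\ref{thm:ineq} and to identify the exact equality condition at each step. Since optimal aiming requires simultaneous equality in both inequalities of~\eqref{eq:opti_inequal}, every intermediate inequality must be tight, and the five conditions of the lemma will emerge as these necessary tightness conditions; the converse then follows by reading the same chain backwards.

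For the upper bound $I(A,\widehat{A}_n)\leq nC$, I would proceed in four standard steps. First, $I(A,\widehat{A}_n)\leq I(A,Y^n)$ by the data-processing inequality, with equality iff $\widehat{A}_n$ is a sufficient statistic of $Y^n$ for $A$, which is condition~(\ref{enum:cn:4}). Second, expand
\begin{align*}
I(A,Y^n)=\sum_{i=1}^{n} h(Y_i\mid Y^{i-1}) - \sum_{i=1}^{n} h(Y_i\mid Y^{i-1},A);
\end{align*}
because $X_i=f(A,\widehat{A}^{i-1})$ is a deterministic function of $(A,Y^{i-1})$ and $Z_i$ is independent of $(A,Y^{i-1})$, the second sum collapses to $\sum_i h(Z_i)$, while bounding the first by $\sum_i h(Y_i)$ via ``conditioning reduces entropy'' is tight iff the $Y_i$'s are mutually independent. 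Third, $h(Y_i)\leq \tfrac{1}{2}\log 2\pi e(P_i+N)$ with equality iff $Y_i$ is Gaussian; this propagates through $X_i=Y_i-Z_i$ and through the sufficient statistic $\widehat{A}_i$ (hence also through $A-\widehat{A}_i$) to yield condition~(\ref{enum:cn:1}). Fourth, $\sum_i \tfrac{1}{2}\log(1+P_i/N)\leq n\cdot \tfrac{1}{2}\log(1+P/N)$ by concavity of the logarithm under $P_i\leq P$, with equality iff $P_1=\cdots=P_n=P$, which is condition~(\ref{enum:cn:2}).

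For the lower bound, I would use $I(A,\widehat{A}_n)=h(A)-h(A\mid\widehat{A}_n)$ together with $h(A\mid\widehat{A}_n)\leq h(A-\widehat{A}_n)\leq \tfrac{1}{2}\log 2\pi e\,D_n$, where the first step is tight iff $A-\widehat{A}_n$ is independent of $\widehat{A}_n$ and the second iff $A-\widehat{A}_n$ is Gaussian. Since any slack at iteration $i<n$ cannot be recovered at later iterations, optimality forces the same chain to be tight at every prefix; applying these equality conditions at each $i$ gives condition~(\ref{enum:cn:3_bis}) and completes condition~(\ref{enum:cn:1}). Condition~(\ref{enum:cn:3}), the mutual independence of the endpoints $\widehat{A}_i$, is then obtained by combining the mutual independence of the $Y_i$'s established above with the fact that, under optimality, each $\widehat{A}_i$ decomposes as an orthogonal (MMSE) update driven by the innovation $Y_i$ alone, so the endpoints factor into independent Gaussian components. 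The main obstacle in my view is precisely this last point: condition~(\ref{enum:cn:3}) is not a direct tightness condition of Theorem~\ref{thm:ineq} and must instead be deduced from the innovations structure forced by the other four conditions.
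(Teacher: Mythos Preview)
Your approach is essentially the paper's: walk through the inequality chain of Theorem~\ref{thm:ineq} and read off each equality condition. Conditions~(\ref{enum:cn:1}), (\ref{enum:cn:2}), (\ref{enum:cn:3_bis}) and (\ref{enum:cn:4}) are obtained by you exactly as in the paper, from the same steps.

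The only divergence concerns condition~(\ref{enum:cn:3}). You correctly observe that equality in the step $H(Y_i\mid Y^{i-1})\leq H(Y_i)$ gives mutual independence of the $Y_i$'s, and then you argue that independence of the $\widehat{A}_i$'s must be \emph{derived} from this together with the MMSE/innovations structure forced by the other conditions. The paper instead attributes condition~(\ref{enum:cn:3}) \emph{directly} to equality in that same step~\eqref{eq:2.5}, without inserting an innovations argument. Your caution is well placed: since $\widehat{A}_i=g(Y^i)$ depends on the whole history $Y^i$, mutual independence of the $Y_i$'s does not by itself force mutual independence of the $\widehat{A}_i$'s, and the paper's one-line attribution is terse on this point (indeed, when the paper later \emph{uses} condition~(\ref{enum:cn:3}) in the proof of Theorem~\ref{thm:f}, it writes $\widehat{A}_i=g(X_i+Z_i)$ as if $g$ acted on $Y_i$ alone). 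So here you are taking a slightly longer but cleaner route than the paper does.
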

\noindent 

Optimal aiming can only be achieved if all the variables are Gaussian (see~\S~\ref{sec:empirical_two} for an empirical investigation).
This Gaussianity suggests that $\mathbf{f}$ and $\mathbf{g}$ are linear functions, as the family of Gaussian distributions is closed under linear combinations. For Gaussian variables, independence is equivalent to decorrelation (or orthogonality), a property we take advantage of in the proofs.

Orthogonality between channel outputs and errors, and between updates (Conditions~\ref{enum:cn:3}. and~\ref{enum:cn:3_bis}. of Lemma 1) suggest MMSE estimation (similarly to the Kalman filter~\cite{kailath1980} heavily used in the stochastic optimal control setting). 
Bridging this result with the previous, it is well-known that the MMSE estimator---which minimizes $D_n$--- is a linear function in the Gaussian setting. 
The general structure of the scheme is thus expected to be linear, with an MMSE estimator on the receiver side.

Since $\mathbf{g}(\mathbf{Y}^i)$ is a sufficient statistic of $\mathbf{Y}_i$ for $\mathbf{A}$, it does not matter if the feedback comes from the endpoints $\widehat{\mathbf{A}}^i$ or from the outputs of the channel $\mathbf{Y}^i$. This is particularly relevant in the present context, since visual and proprioceptive motor feedback information can originate from several sources in the human body.

\bigskip 
 
We now assume for the remainder of the paper that movements are optimal i.e. conditions of Lemma~1 hold.
By further working out the conditions of Lemma~1, we can derive the structure of $\mathbf{f}$ and $\mathbf{g}$. 
We first obtain $\mathbf{g}$ by using the well-known \emph{orthogonality principle}: if $\mathbf{A}$ is to be estimated from the observed data $\mathbf{Y}^i$ by the unbiased estimator $\hat{\mathbf{A}}(\mathbf{Y}^i)$ then the following statements are equivalent:\\[.3ex]
\;\textbullet\;	$\hat{\mathbf{A}}(\mathbf{Y}^i) = \mathbb{E}[\mathbf{A}|\mathbf{Y}^i] = \mathbb{E}[\mathbf{A}\mathbf{Y}^i]^t\mathbb{E}[\mathbf{Y}^i(\mathbf{Y}^i)^t]^{-1}\mathbf{Y}^i$ is the 
	MMSE estimator;\\
\;\textbullet\;     $\mathbb{E}\left[ (\mathbf{A} - \hat{\mathbf{A}}(\mathbf{Y}^i)) \mathbf{Y}_i \right] = 0,~~\forall i$.

\noindent From condition~\ref{enum:cn:3_bis}) we have that $\mathbb{E}[(\mathbf{A}-\mathbf{g}(\mathbf{Y}^i))\mathbf{Y}_i] = 0,~~\forall i $, hence the following theorem results from a direct application of the orthogonality principle.
\begin{theorem}
\label{thm:g}
For the optimal transmission scheme, $\mathbf{g}(\mathbf{Y}^i)$ is the MMSE estimator: $\mathbf{g}(\mathbf{Y}^i) = \mathbb{E} \left[\mathbf{A}|\mathbf{Y}^i \right]$.
\end{theorem}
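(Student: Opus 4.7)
The plan is to derive $g$ from the orthogonality principle recalled just before Theorem~\ref{thm:g}, feeding it the independence statement of Lemma~\ref{lm:Cond}(\ref{enum:cn:3_bis}). In short, condition~(\ref{enum:cn:3_bis}) tells us that $A - \widehat{A}_i$ is independent of $Y^i$, and after reducing to the zero-mean setting this independence translates into the orthogonality condition $\mathbb{E}[(A - \widehat{A}_i)(Y^i)^t] = 0$, which by the orthogonality principle is precisely the defining property of the MMSE estimator $\mathbb{E}[A\mid Y^i]$.

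First I would observe that in the optimal scheme every relevant random quantity is zero-mean. The source $A \sim \mathcal{N}(0,\sigma_0^2)$ and each noise sample $Z_i \sim \mathcal{N}(0,N)$ are zero-mean by assumption, and by Lemma~\ref{lm:Cond}(\ref{enum:cn:1}) every $X^i$, $Y^i$, $\widehat{A}^i$ is Gaussian. A constant bias in $\widehat{A}_i$ would strictly inflate $D_n = \mathbb{E}[(A - \widehat{A}_i)^2]$, so optimality forces $\mathbb{E}[\widehat{A}_i] = 0$. Then condition~(\ref{enum:cn:3_bis}) gives
\begin{equation*}
\mathbb{E}\bigl[(A - \widehat{A}_i)(Y^i)^t\bigr] = \mathbb{E}[A - \widehat{A}_i]\,\mathbb{E}\bigl[(Y^i)^t\bigr] = 0.
\end{equation*}
Applying the orthogonality principle with $x = A$ and $y = Y^i$ then yields $g(Y^i) = \mathbb{E}[A \mid Y^i]$, which is the claim.

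I expect the main subtlety to lie in the zero-mean reduction: one must argue that replacing $\widehat{A}_i$ by $\widehat{A}_i - \mathbb{E}[\widehat{A}_i]$ strictly reduces the distortion $D_n$ whenever the mean is nonzero, so that any optimal scheme must already be unbiased. Once this is in place---and it follows either from the bias-variance decomposition of mean-squared error or from the linearity of $f$ and $g$ discussed just after Lemma~\ref{lm:Cond}, which makes the scheme mean-preserving---everything else is an immediate translation of independence into orthogonality.
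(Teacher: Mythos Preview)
Your proposal is correct and follows essentially the same route as the paper: invoke condition~\ref{enum:cn:3_bis}) of Lemma~\ref{lm:Cond} to obtain $\mathbb{E}\bigl[(A-g(Y^i))(Y^i)^t\bigr]=0$, then apply the orthogonality principle. The paper's proof is a single sentence that passes directly from independence to this orthogonality without discussing means; your explicit zero-mean reduction simply makes precise a step the paper leaves implicit (and which is indeed needed, since the paper's own statement of the orthogonality principle is phrased for unbiased estimators).
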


The optimal scheme thus yields an endpoint $\widehat{\mathbf{A}}_i = \mathbf{g}(\mathbf{Y}^i)$ obtained as the best least-squares estimation of $\mathbf{A}$ from all the current observations of the (independent) channel outputs $\mathbf{Y}^i=(\mathbf{Y}_1,\ldots,\mathbf{Y}_i)$.

\begin{theorem}
\label{thm:f}
For the optimal transmission scheme, $\mathbf{f}$ produces a scaled version of the estimation error:
\begin{align}
\mathbf{X}_i & = \mathbf{f}(\widehat{\mathbf{A}}^{i-1}, \mathbf{A})  = \alpha_i (\mathbf{A} - \widehat{\mathbf{A}}_{i-1}) \\
& = \alpha_i (\mathbf{A} - \mathbb{E}\left[\mathbf{A}|\mathbf{Y}^{i-1}\right]) = \alpha_i (\mathbf{A} - \mathbf{g}(\mathbf{Y}^{i-1})),
\end{align}
 where $\alpha_i$ is such that the power constraint $\mathbb{E}[\mathbf{X}_i^2] = P$ is met.
\end{theorem}
\noindent 

The signal sent at the input of the channel is thus simply the difference between the initial \emph{message} $\mathbf{A}$ and its most recent \emph{estimate} $\widehat{\mathbf{A}}_{i-1}$, rescaled to meet the power constraint. 
The previous two theorems formally define the encoding function $\mathbf{f}$ and decoding function $\mathbf{g}$.

The encoding function $\mathbf{f}$ is mathematically simple and appears biologically feasible, since the difference between $\mathbf{A} - \widehat{\mathbf{A}}_{i-1}$ is simply the remaining distance to the target, which can be estimated easily by the eye. In addition, scaling processesses (for the $\alpha_i$'s) have been identified in the literature e.g. within the basal ganglia~\cite{rosenbaum2009}.
The decoding function $\mathbf{g}$ is expressed as a function of $\mathbf{Y}^i$, which suggests that the motor organs memorize all channel outputs for later use. To our knowledge, this seems unreasonable.
Fortunately, as we next show, there is no need for a memory within the motor organs.

\begin{theorem}
\label{thm:incr}
Let $\mathbf{A}_i = \mathbf{X}_i/\alpha_i$ be the unscaled version of $\mathbf{X}_i$, with $\mathbf{A}_1 = \mathbf{A}$ and $\alpha_1 = 1$. We have:
\begin{align}
\mathbb{E}[\mathbf{A}|\mathbf{Y}^{i}] & = \sum_{j=1}^{\smash{i}} \mathbb{E}[\mathbf{A}|\mathbf{Y}_j] = \sum_{j=1}^{\smash{i}}\mathbb{E}[\mathbf{A}_j|\mathbf{Y}_j] \label{eq:thmincr:1}\\
 & = \sum_{j=1}^{\smash{i}} \frac{1}{\alpha_i} (1 + N/P)^{-1}\mathbf{Y}_j
\end{align}
\end{theorem}
\noindent 
The theorem, through~\eqref{eq:thmincr:1}, shows that the decoding process is recursive: At each step, a ``message'' $\mathbf{A}_i$ that is independent from the previous ones ($\lbrace{\mathbf{A}_1,\ldots,\mathbf{A}_{i-1} \rbrace}$) is formed, sent to the channel, and estimated optimally by least-square minimization: 
Say $\mathbf{A}$ is initially sent over the channel. The first estimate at the decoder is $\mathbb{E}[\mathbf{A}|\mathbf{Y}_1]$; the motor organs respond and the limb moves by $\widehat{\mathbf{A}}_1 = \mathbb{E}[\mathbf{A}|\mathbf{Y}_1]$. The remaining distance is $\mathbf{A}_2 = \mathbf{A} - \mathbb{E}[\mathbf{A}|\mathbf{Y}_1]$, and the next estimate is $\mathbb{E}[\mathbf{A}_2|\mathbf{Y}_2]$. The limb moves again, by $\widehat{\mathbf{A}}_2 = \mathbb{E}[\mathbf{A}_2|\mathbf{Y}_2]$. The remaining distance is now $\mathbf{A}_3 = \mathbf{A} - \left(\mathbb{E}[\mathbf{A}|\mathbf{Y}_1] + \mathbb{E}[\mathbf{A}_2|\mathbf{Y}_2] \right)$.
The estimate at each iteration is thus simply the sum of a new estimate and the sum of all previous estimates i.e. exactly the total distance covered since $\mathbf{A}$. This distance, and hence the channel outputs, do not need to be stored by some internal memory inside the motor organs --- they are in fact ``memorized'' by the limb simply remaining where it is in the absence of a signal.\footnote{Note that we assume that the dynamics of the limbs are fast enough with respect to $T$ that the given distance can always be covered; this reflects one of the main ideas behind this work, namely that for the homing-in phase, which is exceptionally long compared with the distance covered, the limb impedance is not the limiting factor.}
The optimal procedure is thus incremental and optimal at each step and is achieved on-line without memory.

For completeness, we finally check optimality in~\eqref{eq:opti_equal} by evaluating the distortion; we also provide the closed form expression for $\alpha_i$.
\begin{theorem}
\label{thm:dist}
The quadratic distortion $D_i = \mathbb{E}[(\mathbf{A}-\widehat{\mathbf{A}}_{i})^2]$ decreases geometrically in the number of iterations $i$:
\begin{align}
D_{i} = \frac{\sigma_0^2}{(1 + P/N)^{i}}.
\end{align}
The scaling factor $\alpha_i$ increases geometrically in the number of iterations $i$:
\begin{align}
\alpha_i = \alpha_0 (1 +P/N)^{i/2},
\end{align}
where $\alpha_0 = \frac{\sqrt{P}}{\sigma_0}$.
\end{theorem}
The capacity $C$ is exactly achieved and the distortion decreases geometrically (divided by $1 + P/N$) at each iteration. 
An implementation of the scheme, based on the equations given in the theorems, is presented in the bottom left panel of Fig.~\ref{fig:it_model}, and a tentative mapping to physiological components is given in the bottom right panel of Fig.~\ref{fig:it_model}. 

The formulas for the distortion and the general scheme have been previously obtained in an information-theoretic context by Gallager and Nakibo\u{g}lu~\cite{gallager2010} who discussed an older scheme by Elias~\cite{elias1957}. To our knowledge, the constructive proofs of the Elias scheme given here, as well as its application to model goal-directed movements is novel.

\subsection{From a Discrete Time Model to Continuous PVPs}
We presented a discrete-time scheme in line with the ideas of iterative corrections. If the iteration time is constant and equal to $T$, and $n$ iterations have been completed, then the duration of the second phase $t$ is given by $t = nT$. 
The profile of variance is then a discrete set of points at time $\lbrace t_k = kT \rbrace_{k=1}^n$, where variance amplitudes are given by Theorem~\ref{thm:dist}.

When we operationalize the model through the monitoring of PVPs, we note:
\begin{itemize}
\item That $T$ is constant might be true only on average. Different participants may have different values of $T$. Given a single participant, $T$ may also vary with fatigue, learning etc. 
\item Even if $T$ were constant, the uncertainty associated with determining the starting time (and as a result $\tau$), would induce iterations that are not synchronized. 
\end{itemize}

The iterative corrections are thus in practice desynchronized. As a result, when considering sufficiently many trajectories during the construction of a PVP, it is likely that any time interval, however small, will contain at least one new correction from one of the trajectories.
Asymptotically (i.e. considering infinitely many trajectories), we can therefore assume an infinitesimal feedback time $\delta T$, leading to a continuous time formulation. This also means the number of iterations $n = t/\delta T$ goes to infinity. This does not mean that each trajectory is obtained as the result of a continuous on-line control; we posit that a set of trajectories can be described by a continuous model where variance decreases smoothly over time.

With $n = t/\delta T$ one can rewrite~\eqref{eq:opti_equal} as 
\begin{align}
t = \frac{1}{ C'}\log_2 \frac{\sigma_0}{\sigma(t)} \label{eq:second_phase_cont} 
\end{align}
where the sample mean squared error (continuous time) $\sigma^2(t)$ is taken for the ensemble mean squared error (discrete time) $D_n$ and $C' = C/\delta T$ is the capacity in \emph{bit per second}. Note that this local formulation holds for any time interval in the second phase: For arbitrary $\Delta t \geq 0$
\begin{align}
\log_2 \sigma(t + \Delta t) = \log_2 \sigma(t) - C'\Delta t,\label{eq:logscale}
\end{align}
as long as $t$ and $t + \Delta t$ are in the second phase.
In the remainder of this work, we do not discriminate between $C'$ and $C$ and use invariably $C$ for simpler notations.


The remainder of this work is almost entirely empirical: we first validate the theoretical results just presented on the variance decreasing phase. Then, we propose an empirical study of the first phase, since no useful information-theoretic scheme can predict an increasing variance. 
Fitts' law is then derived by combining the results obtained on both phases.

\section{Method}
\label{sec:method}
\subsection{Dataset Presentation}
\label{sub:datasets}
Many studies of goal-directed movements having previously been conducted, we have re-analyzed existing datasets to provide empirical support for our model. 

The first is from a study by Guiard \etal{}~\cite{guiard2011}, following a discrete protocol~\cite{fitts1964}. The task of the participant was to move a cursor from a given starting position to a given line $\D{} = 150$\,mm away, following 5 different instructions: $\#1$ maximize speed (U-Fast), $\#2$ emphasize speed (Fast), $\#3$ balance speed and accuracy (Balanced), $\#4$ emphasize accuracy (Precise), $\#5$ maximize accuracy (U-Precise).

The second is \emph{pointing dynamics dataset} (PD-dataset) by Müller \etal{}~\cite{mueller2017}. The experiment is a replication of Fitts' well-known, one dimensional 1954 reciprocal task~\cite{fitts1954}, where input is tracked by a mouse. Participants have to move back and forth as fast as possible between two targets of same width \W{} and located \D{} apart.
\D{} (in m) $ \in \lbrace{ 0.212,0.353 \rbrace}$ and $\id{} \in \lbrace{ 2,4,6,8 \rbrace}$ (see Eq.~\eqref{eq:fitts}) were fully crossed. 
The PD-dataset is already pre-processed, regularly sampled and denoised with a low-pass filter at about 8Hz. We performed a similar interpolation and low-pass filtering process on the G-dataset. Full details are available in the original publications.

In both datasets, position is a one-dimensional real signal. As explained below, the two paradigms provide a general picture of the speed-accuracy tradeoff in precise aimed movements:
	\begin{itemize}
	\item \emph{Manipulating speed and accuracy.} For the PD-dataset, accuracy is an independent variable, manipulated via a visual target of size \W{}. In the G-dataset, the participants do not aim toward a target but to a point in space. They are in charge of balancing the speed and accuracy of movements to conform to the instruction given by the experimenter: both movement time and accuracy are thus dependent measures. 
	\item \emph{Discrete or reciprocal task.}
The G-dataset was acquired using a discrete task~\cite{fitts1964}, where the cursor is repositioned at the start after each movement, whereas the PD-dataset was acquired using a reciprocal task~\cite{fitts1954}.
	\item \emph{Multi-joint Movements.} In both experiments, the movements solicited  several joints:  fingers, wrist, elbow, shoulder, and sometimes even elicited movement from the back. These should be differentiated from single-joint movements (e.g., wrist-only movements~\cite{meyer1988}) because the latter eliminate the redundancy of the degrees of freedom i.e. the issue at the heart of variability in aimed movements~\cite{guigon2008, todorov2002}. 
	
	\end{itemize}

\subsection{Positional Variance Profiles (PVPs)}
\label{sub:pvp_method}
Tracking the evolution of variance over time is achieved via so-called Positional Variance Profiles (PVP). The first step in computing PVPs is to identify individual movements from the time series of the position. This is achieved using a homemade segmentation algorithm, described in Appendix~\ref{app:segment}. 

PVPs are then computed, considering all the trajectories produced for a given condition, using following operations:
\begin{enumerate}
\item The trajectories are synchronized by using the starting time as the new time origin. 
\item Each trajectory is extended by padding with the final value of position so that all movements last the same amount of time (say 2~s). Indeed, while movements may stop at some point, the position signal remains constant as long as there is no movement, just as if the participant paused longer before going on to the next movement\footnote{There are thus practically 3 phases; the 2 phases discussed previously, and a third stationary phase, where position is conserved and nothing happens.\label{ft:phi}};
\item Once the set of trajectories has been synchronized and extended, compute the variance of the position for this set. This time-series representing the evolution of variance over time is what we call a PVP.
\end{enumerate}

Fig.~\ref{fig:traces} displays a set of synchronized trajectories for a participant of the G-dataset, and Fig.~\ref{fig:pvps} displays several example PVPs computed from empirical data. We asserted the unimodality of PVPs for both datasets. $94\%$ (75 out of 80) of PVPs were found unimodal In the G-dataset; $92\%$ (88 out of 96) in the PD-dataset. Non-unimodal profiles had small secondary peaks due to trajectory outliers. Hence, as expected from \S~\ref{sec:background}, variance profiles for both datasets can be considered unimodal.
\begin{figure}[htbp]
\centering
\includegraphics[width=.9\columnwidth,height=5cm]{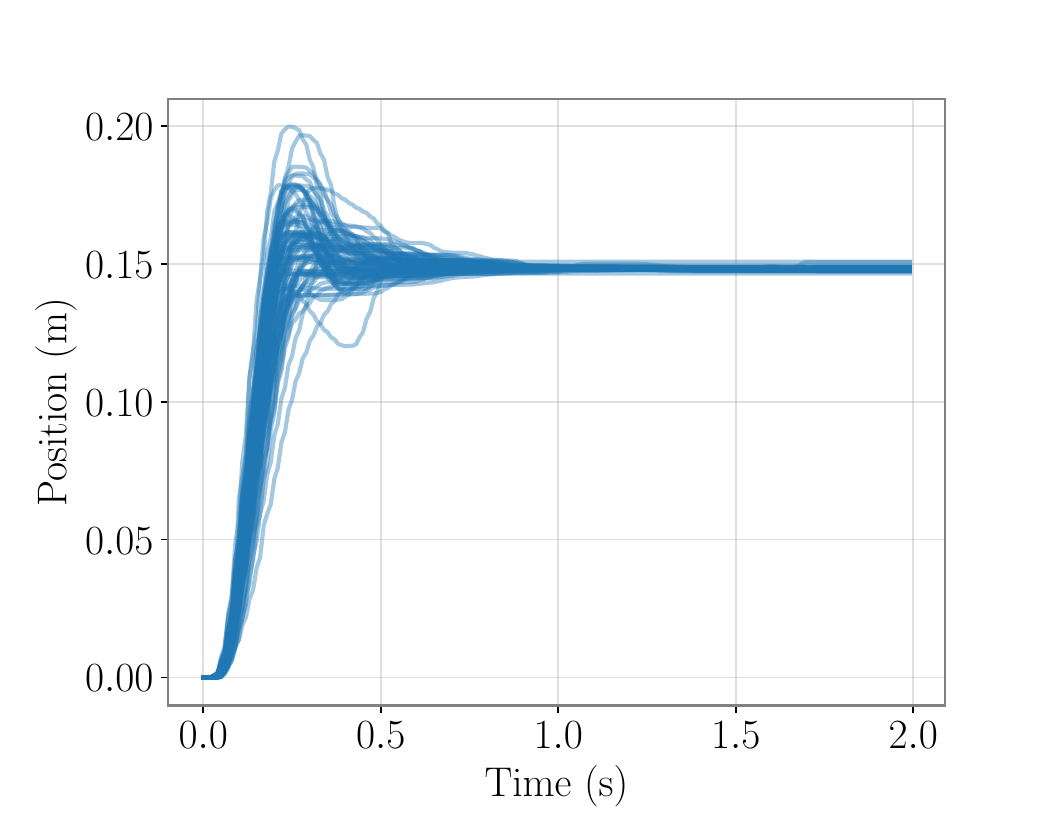} 
\caption{The set of trajectory for Participant X performing under condition $\#3$ (Balanced) for the G-dataset. All trajectories have been time-shifted so as to start at the origin; after about $1~s$, the trajectories are all stationary. 
}
\label{fig:traces}
\end{figure}

\subsection{Calculations performed}
For each dataset, we segmented trajectories as described in \S~\ref{sec:method}. \mt{} was computed (black diamonds on respectively the top and bottom left panel of Fig.~\ref{fig:taustats}), and Fitts' law evaluated (results deferred to \S~\ref{sec:fitts}), to verify that our segmentation procedure produced results in accordance with what is expected from the literature~\cite{soukoreff2004}.

We then computed the PVPs for each block. The decreasing and increasing phases of variances were straightforwardly identified by locating the mode ($\tau, \sigma^2_0$) of the PVP. To evaluate our model, the decreasing phase of variance was fitted with a spline, as described in \S~\ref{sec:empirical_two}.
To study the increasing phase of variance, we systematically observed the effect of the different experimental conditions on $\sigma_0$ and $\tau$, as well as $\D{}_{\tau}$, the distance covered at ~$t=\tau$, since our theoretical model in \S~\ref{sec:model} builds heavily on the position signal.

A box-plot representation of $\tau$ (See Appendix~\ref{app:boxplot}) with data grouped per participant revealed a large disparity in the average $\tau$ per participant (ranging from $\tau = 0.2$ for P2 to $\tau = 0.45$ for P9 in the PD-Dataset). Similarly, the value of $C$, the capacity in our theoretical model, is by definition dependent on each participant.
Hence, we fitter mixed linear model with random intercepts clustered by participant in our regression models when analyzing the PD-dataset, and used one-way Repeated Measures (RM) ANOVA when analyzing the  G-dataset. Greenhouse-Geisser correction, indicated by the $_{GG}$ subscript, was applied when the sphericity assumption was broken.

\section{\added{Exponentially-Decreasing Variance Phase: }Empirical Validation}
\label{sec:empirical_two}

\begin{table*}
\centering
\caption{Mean and Standard deviation Summary for the PD-dataset (left) and G-dataset (right). \label{tab:emp_two}}
\resizebox{.8\textwidth}{!}{
\begin{tabular}{lrrr||lrrr}
\toprule
\W{} ($\D{}\times \id{}$) & $\Omega~(\mu / \sigma)$ & $C~(\mu / \sigma)$ & $r^2~(\mu / \sigma)$ & Instruction & $\Omega~(\mu / \sigma)$ & $C~(\mu / \sigma)$ & $r^2~(\mu / \sigma)$\\ 
\midrule
0.8 ($212 \times 8$) & 1.378/0.214 & 6.22/1.18 & 0.98/0.01 & 1 --- U-Fast & 0.432/0.145 & 4.274/3.18 & 0.87/0.13\\ 
1.4 ($353 \times 8$) & 1.373/0.172 & 6.41/0.91 & 0.99/0.00 & 2 --- Fast & 0.713/0.206 & 4.20/1.51 & 0.94/0.05\\
3.3 ($212 \times 6$) & 1.050/0.195 & 6.95/0.97 & 0.99/0.01 & 3 --- Balanced & 0.851/0.249 & 5.43/1.61 & 0.96/0.05\\
5.5 ($353 \times 6$) & 1.077/0.158 & 6.93/1.18 & 0.99/0.01 & 4 --- Precise & 0.999/0.365 & 5.23/0.984 & 0.97/0.03\\
14.1 ($212 \times 4$)& 0.814/0.144 &6.15/1.59 & 0.97/0.02 & 5 --- U-Precise & 1.310/0.389 & 5.29/0.814 & 0.98/0.02\\
23.6 ($353 \times 4$)& 0.808/0.141 & 6.83/1.64 & 0.98/0.01\\
70.7 ($212 \times 2$)& 0.429/0.130 & 6.04/3.97 & 0.93/0.11\\
117.8 ($353 \times 2$)& 0.527/0.143 & 4.82/2.76 & 0.95/0.03\\
\midrule
$(\overline{\mu} / \overline{\sigma})$ & 0.932/0.368 & 6.30/2.00 & 0.97/0.04 & & 0.861/0.405 & 4.89/1.86 &  0.95/0.08\\
\bottomrule
\end{tabular} 
}
\end{table*}

Theorem~\ref{thm:dist} and its continuous counterpart~\eqref{eq:logscale} forecast a decrease of the positional standard deviation at an exponential rate $C$. The parameter $C$ is a constant that characterizes the channel, is inherent to each participant, and should therefore be unrelated to external constraints, such as task geometry. This makes for the following predictions:

\noindent\textbf{Prediction 1:} The rate at which the positional standard deviation decreases is exponential.

\noindent\textbf{Prediction 2:} That exponential rate $C$ is constant and does not depend (within limits) on the task parameters (i.e. \D{}, \W{}, \id{}, speed-accuracy strategies).

Both predictions are tested on the PD and G-datasets. It is instructive to look at empirical PVPs in log-lin scale before any statistical analysis, see Fig.~\ref{fig:pvps}; the left panel represents the 8 PVPs for a single participant of the PD-dataset, performing all 8 conditions, the middle panel represents the 5 PVPs for a single participant of the G-dataset, performing all 5 conditions, and the right panel is a single PVP taken from the G-dataset with its corresponding spline fit.
One sees that, as expected from~\eqref{eq:logscale}, the second phase is close to linear and that the slopes are all more or less confounded. This suggests that values of $C$ for each PVP are about equal, suggesting that predictions 1 and 2 hold.
\begin{figure*}[t!]
\centering
\makebox[\textwidth]{
\includegraphics[width=0.35\textwidth]{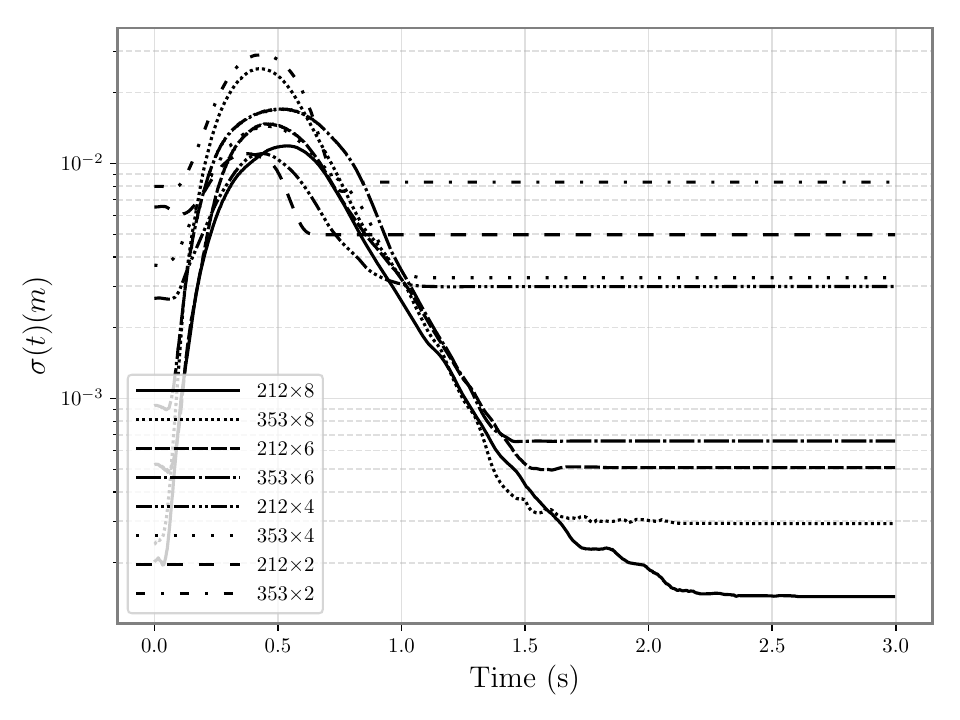} 
\includegraphics[width=0.35\textwidth]{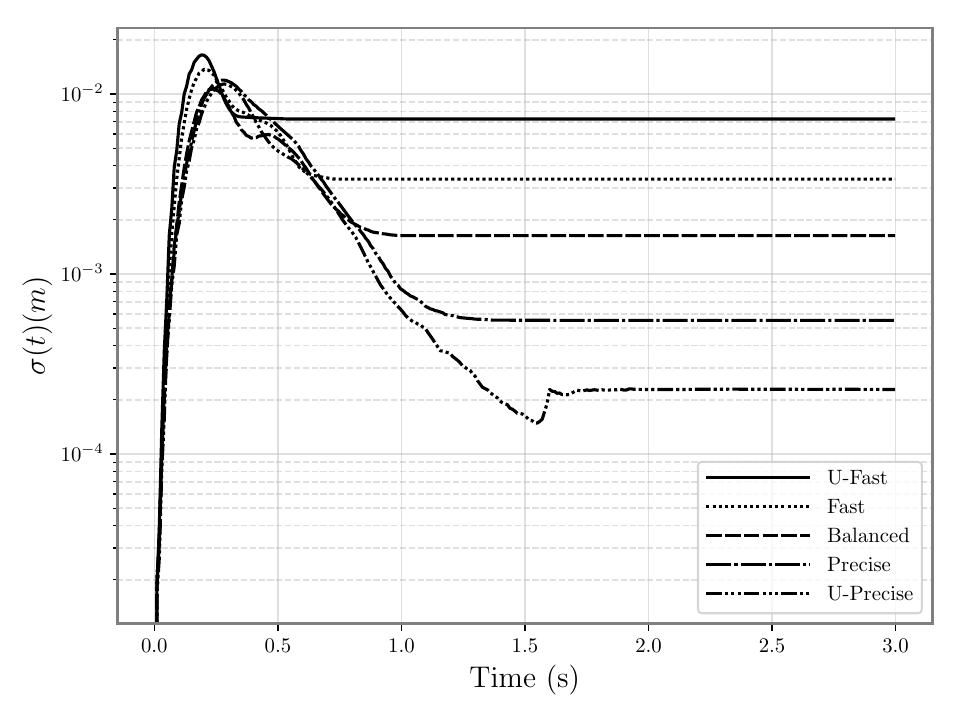} 
\includegraphics[width=0.35\textwidth]{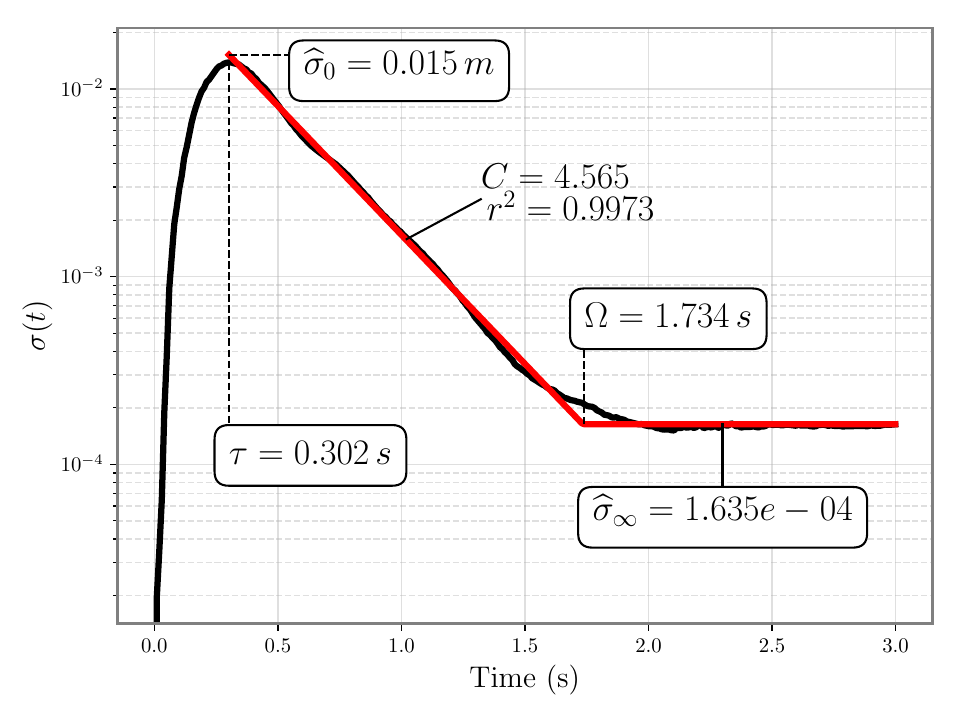} 
}
\caption{Various PVPs in log-lin scale, where time and positional standard deviations are given in standard units. Left panel: PVPs for all conditions of $P8$ from the PD-dataset. Each condition should be read as $\D{}\,(mm) \times \id{}$. Middle panel: PVPs for all conditions of $P8$ from the G-dataset. Right panel: PVP for condition U-Precise ($\#5$) of the G-dataset in log-lin scale with corresponding fit and estimated values of $C$ and $\widehat{\sigma}_{\infty}$.\label{fig:pvps}}
\end{figure*}

\subsection{Prediction 1: Exponential Decrease of Standard Deviation}
As shown in~\eqref{eq:logscale}, an exponential decrease of the standard deviation will appear linear in a log-lin scale. Hence, in a log-lin scale, the PVP is theoretically composed of the following three phases:
\begin{enumerate}
\item a first phase of duration $\tau$;
\item a second phase which decreases linearly, from standard deviation $\sigma_0$ until some value, say $\sigma_{\infty}$;
\item a third stationary phase where standard deviation is constant and equal to $\sigma_{\infty}$, which lasts as long as the duration of the extended trajectories used to construct the PVP.
\end{enumerate}

A piecewise linear model (spline) was fit to the second and third phase of the PVP (log-lin, see the red spline in the right panel of Fig.~\ref{fig:pvps}):
\begin{itemize}
\item a first order polynomial for the second phase. The intercept $\log_2 \widehat{\sigma}_0$ is in theory located at ($\tau,\log_2 \sigma_0$), and the algebraic value of the slope is $-C$;
\item the constant $\log_2 \widehat{\sigma}_{\infty}$ for the stationary phase.
\end{itemize}
The spline has a single knot, defined at ($\Omega, \log_2 \widehat{\sigma}_{\infty}$); equations of the spline are:
\begin{align}
\log_2 \sigma(t) & = \log_2 \widehat{\sigma}_0 - C (t-\tau) \qquad \qquad  \mbox{ if } \tau \leq t \leq \Omega, \\
 & = \log_2 \widehat{\sigma}_0 - C (t-\Omega) = \log_2 \widehat{\sigma}_{\infty}  \qquad \mbox{ else.}
\end{align}
The fit was computed by ordinary least squares (OLS) on the logarithm of the PVP.
This minimization was performed straightforwardly with numerical optimization.

A fit is illustrated in the right panel of Fig.~\ref{fig:pvps}, where the empirical PVP is displayed in black and the piecewise linear fit (spline) is displayed in red. The first phase of the PVP lasts about 300\,ms ($\tau = 302$\,ms); from there the PVP drops quasi-linearly until $\Omega = 1.7$\,s, which defines the knot of the spline at which the PVP levels off. The slope is estimated at $C = 4.6$ bit/second; the endpoint accuracy (stationary phase) is estimated at $\widehat{\sigma}_{\infty} = 1.6e^{-4}$\,m. A coefficient of determination, computed only on the second phase, is also given ($r^2 = 0.9973$).\footnote{It can seem surprising that we give a $r^2$ value computed on the second phase, whereas the model was fit simultaneously on the second and third phase. But notice that by construction, the positional variance is necessarily constant after all movements have terminated. If one were to fit a linear model on the constant phase, one would invariably get a null slope and a perfect fit ($r^2 = 1$), so that the $r^2$ would increase mechanically with longer extension times.}

The coefficients of determination were computed on all PVPs, see Table~\ref{tab:emp_two}. 
For the PD-dataset, the quality of fits are mostly equivalent, except for the 2 conditions with $\id{} = 2$. 
Generally, all goodness of fits computed are high, with an average $r^2$ for the G-dataset of $0.95$ and an average $r^2$ for the PD-dataset of $0.97$.
For the G-dataset, the quality of fit increases with the precision requirement: the mean value of $r^2$ increases and its standard deviation decreases when participants are instructed to emphasize accuracy. 

This can be explained by observing the left and middle panels of Fig.~\ref{fig:pvps}, which show that for the conditions that require low precision (Conditions $\#1$ and $\#2$ in the G-dataset and $\id{}=2$ in the PD-dataset), the second phase is short and the edges of the PVPs are rounded off (likely due to the regularity of human produced movements). We suppose that this rounding off of the edges has more effect the shorter the second phase, leading to a degradation of the $r^2$.

\subsection{Prediction 2: Invariance of $C$}
\paragraph{Effect of \D{} and \W{}  on $C$} 
Average and standard deviation values of $C$ are given Table~\ref{tab:emp_two}; values of $C$ are between 6 and 7 bit/second except for the condition $\W{} = 117.8$. The standard deviations are particularly high for $\id{} = 2$ (coefficient of variation $> 1/2$), indicating that the estimates of the slopes are unreliable in that condition.

A mixed linear regression with random intercept was computed for $C$ with fixed effects \D{} ($\mu = 12.7, \sigma = 23.5, \mbox{cv} = 1.85$), \W{} ($\mu = 2.76, \sigma = 2.99, \mbox{cv} = 1.08$) and interaction term \D{}:\W{} ($\mu = -83.7, \sigma = 72.1, \mbox{cv} = 0.86$) to estimate the effect of each factor, all giving coefficients of variations (cv) close to 1, except for the fixed intercept ($\mu =5.9, \sigma = 0.93, \mbox{cv} = 0.15$). 

\paragraph{Effect of Instructions on $C$}
Average and standard deviation values of $C$ are given Table~\ref{tab:emp_two}. Average values of $C$ are between $4.20$ and $5.43$ bit/second, the maximum being attained for the balanced condition ($\#3$). Instructions to emphasize accuracy lead to lower standard deviations for $C$: from $\sigma(C) = 3.18$ for the condition U-Fast, down to $\sigma(C) = 0.814$ for the condition U-Precise. 
Hence, more precise movements lead to more reliable estimates, likely due to the increase of the duration of the second phase.

To evaluate the effect of instruction on $C$, a one way repeated measures ANOVA was computed ($F(1.88,28.25)=22, \quad p_{GG}=0.17, \quad \eta^2_g = 0.08$).

Both analyses show that these two datasets present very little to no effect of instruction or \D{} and \W{} on the mean value of $C$, thereby supporting the hypothesis that $C$ is a constant that does not depend on the task parameters. 
However, reliable estimation of $C$ seems possible only when the accuracy requirement of the task is stringent.

\subsection{Gaussian Trajectories}
The Gaussian assumption i.e. that observed samples at any given time follow a Gaussian distribution is central to our model.
We asserted the Gaussianity of the position signal throughout the movement for both datasets by:

\begin{enumerate}
\item repeating the first three steps of the PVP construction;
\item starting from $t = \tau$, pooling position data into a vector for each participants for each condition and for each timestamp to create a so-called ``position slice'';
\item normalizing each position slice.
\end{enumerate}
The normalized position slices were pooled, from which a quantile-quantile plot (qqplot) was drawn with a standard normal distribution for comparison, see Fig.~\ref{fig:qqplot}. The left panel represents the qqplot for the G-dataset, whereas the right panel represents the qqplot for the PD-dataset. Fig.~\ref{fig:qqplot} shows the empirical distribution to be almost symmetrical, but has heavier tails than the normal distribution in both cases. For about $96\%$ of the data however (in the range where points are less than $2$ standard deviations away from the mean value), the empirical qqplot is an almost exact fit, see Fig.~\ref{fig:qqplot}. The Gaussian assumption thus seems safe.

\begin{figure}
\centering
\makebox[1\columnwidth]{
\includegraphics[width = 0.5\columnwidth, ]{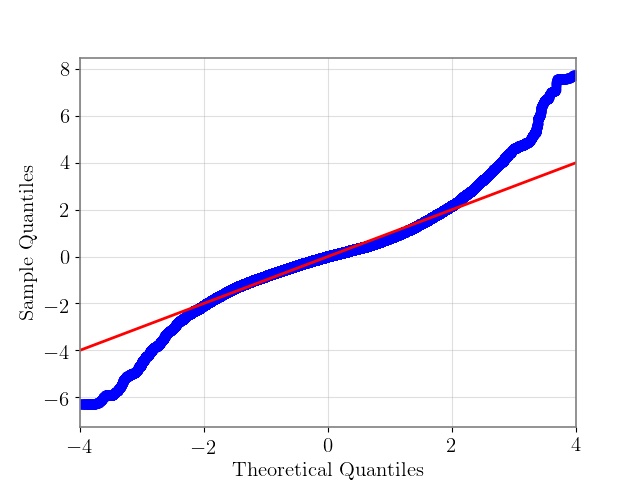} 
\includegraphics[width = 0.5\columnwidth]{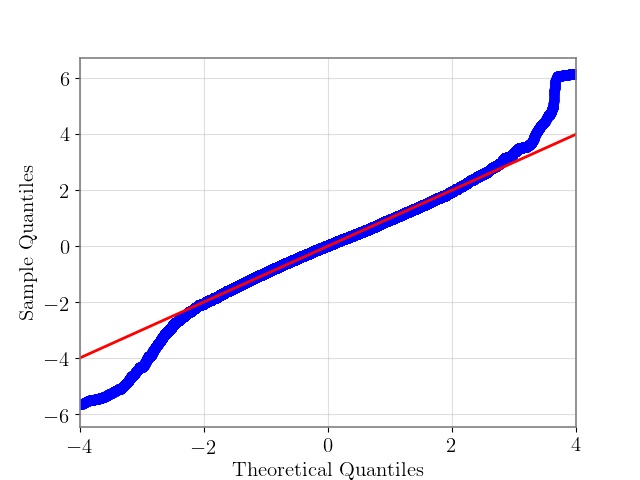} 
}
\caption{Quantile-Quantile plots for the G-dataset (left panel) and the PD-dataset (right panel). Data considered is the reunion of the normalized data per participant and per condition computed for all sampling instants\label{fig:qqplot}}
\end{figure}

\bigskip

While our theoretical model of the second phase shows promising results, an equivalent model for the first phase is required to make sense of the whole movement. As explained previously however, a useful information-theoretic model cannot lead to an increasing variance. Rather than proposing a model for the first phase that draws from a different field, we prefer at this point to present an empirical analysis of the first phase. The surprising result that we find is that the duration of the first phase is almost constant.

\section{Variance Increasing Phase: Empirical Study}
\label{sec:pvp_results}
\D{}, \W{}, \id{} and speed-accuracy strategy were varied and its effect on the features of the first phase $\sigma_0$, $\D{}_{\tau}$ and~$\tau$ was observed.
We find that movements in this first, variance increasing phase are affected primarily by \D. We also find that the duration (and its variations even more) of this first phase is small in comparison to the total movement time.

\begin{figure*}[t!]
\centering
\makebox[\textwidth]{
\includegraphics[width=.35\textwidth]{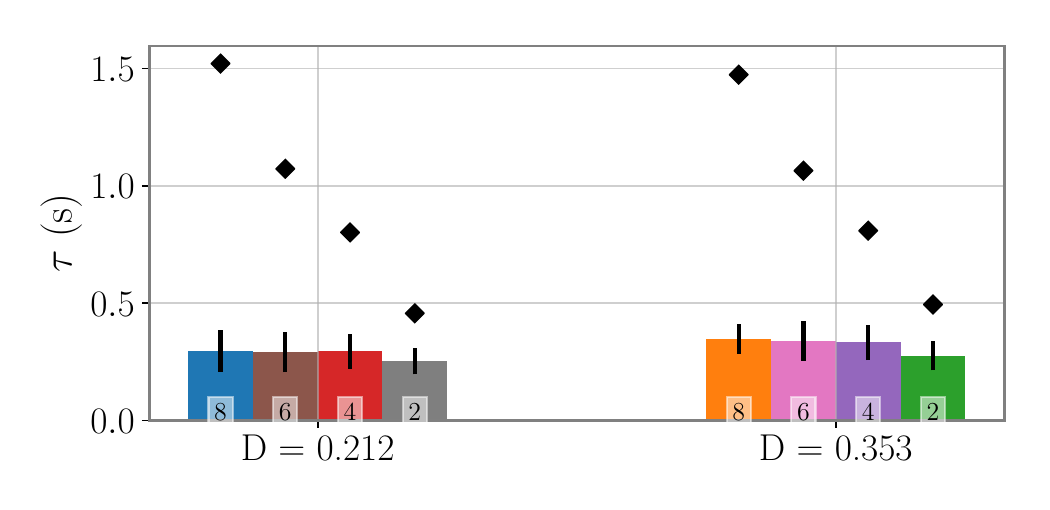} 
\includegraphics[width=.35\textwidth]{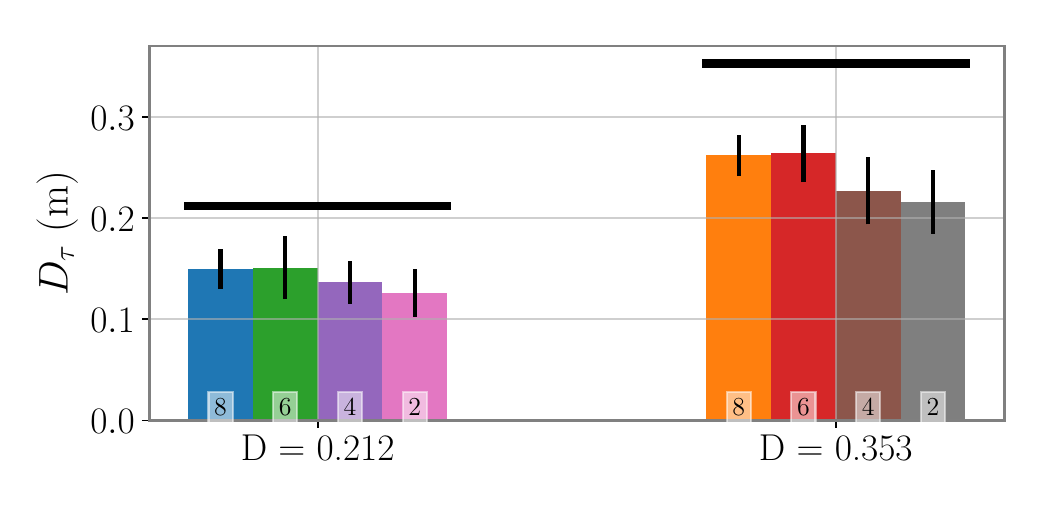} 
\includegraphics[width=.35\textwidth]{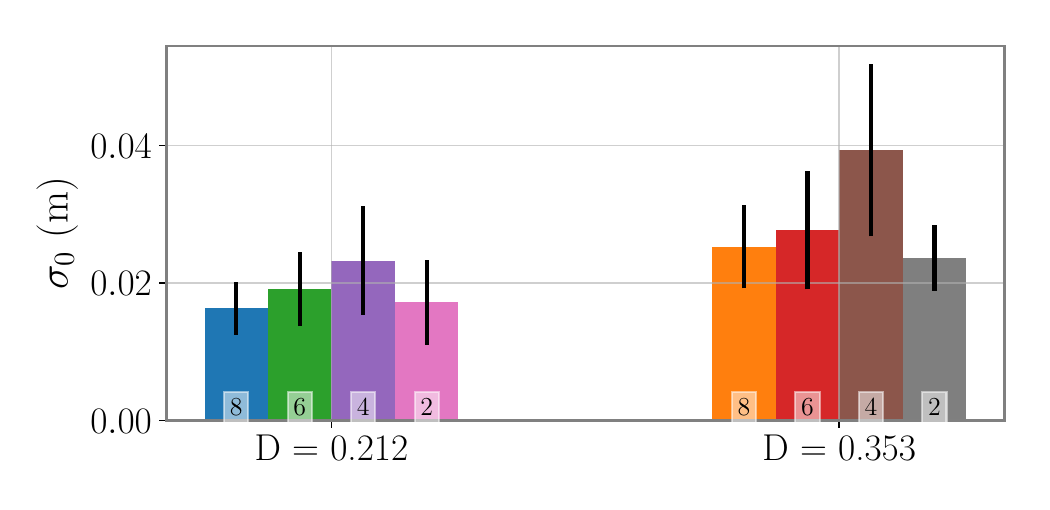}} \\
\makebox[\textwidth]{
\includegraphics[width=.35\textwidth]{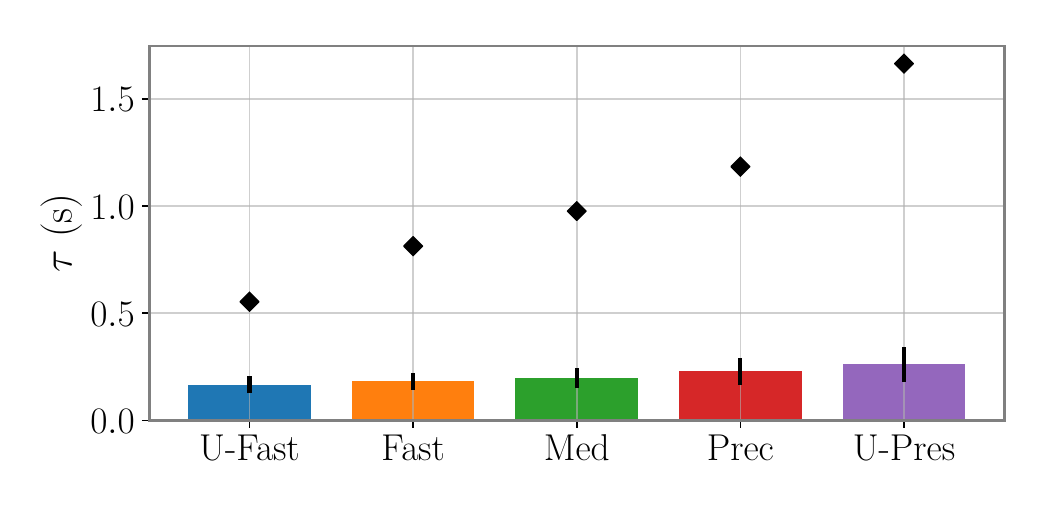} 
\includegraphics[width=.35\textwidth]{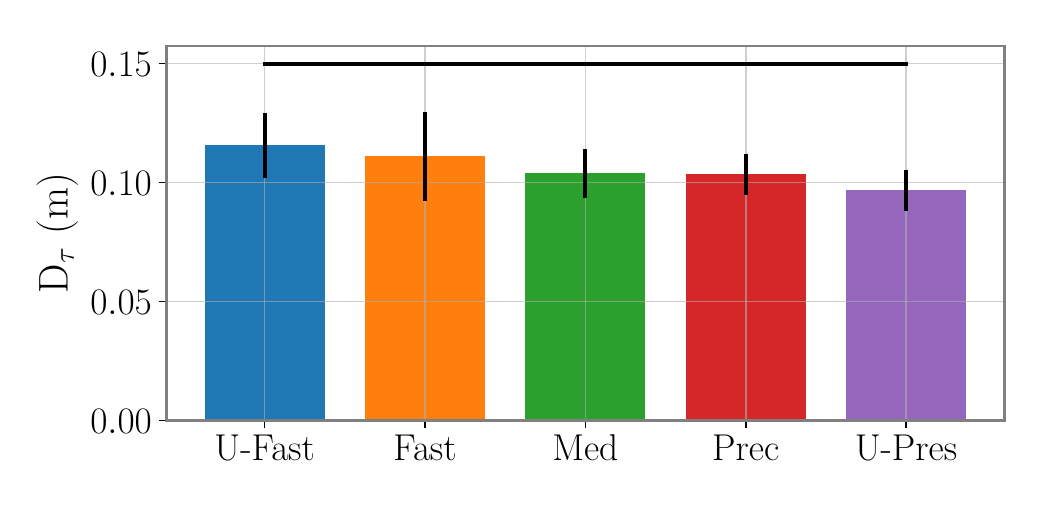} 
\includegraphics[width=.35\textwidth]{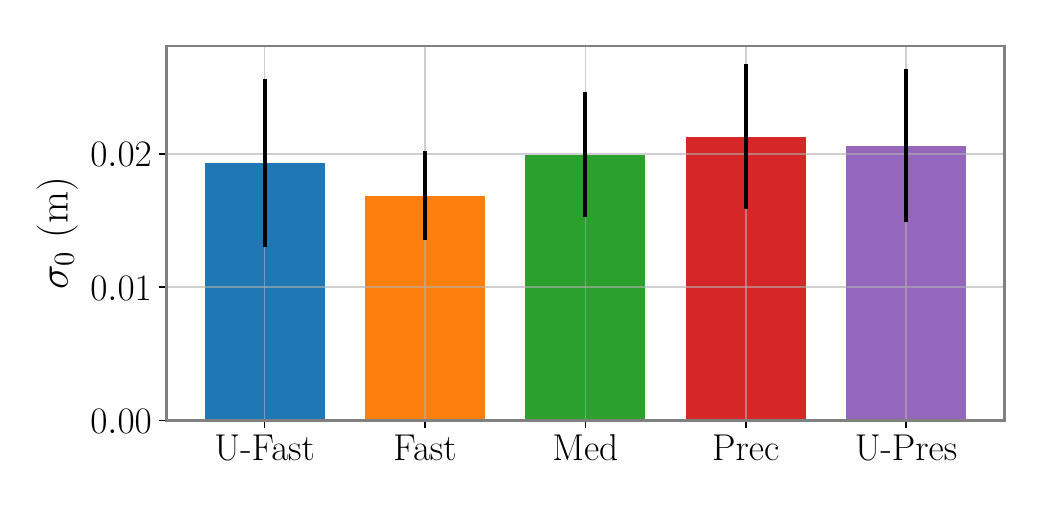} 
}
\caption{Top: Effects of $\D{}$, $\id{}$ and $\W{}$ on $\tau$ (top left panel), $\D{}_{\tau}$ (top middle) and $\sigma_0$ (top right) for the PD-dataset. Bars are grouped by $\D{}$ condition, with the 4 bars on the left corresponding to $\D{}= 0.212~\mbox{m}$ and the 4 bars on the right corresponding to $\D{} = 0.353~\mbox{m}$. Each bar is labeled with its corresponding level of \id{}.
Bottom: Effects of instruction on $\tau$ (bottom left panel), $\D{}_{\tau}$ (bottom middle) and $\sigma_0$ (bottom right) for the G-dataset.
On the $\tau$ panels, total movement time is represented with black diamonds, and on the $\D_{\tau}$ panels, \D{} is represented in a black thick line.\label{fig:taustats}}
\end{figure*}

\begin{table}
\centering
\caption{Mean ($\mu$) and standard deviation ($\sigma$) summary for $\tau$, $\D{}_{\tau}$ and $\sigma_0$ for the PD-dataset. The left column indicates the task conditions, where \W{} and \D{} are expressed in mm. All other units are standard. \label{tab:pdsum}}
\resizebox{\columnwidth}{!}{
\begin{tabular}{lrrrrrr}
\toprule
\W{} ($\D{}\times \id{}$) & $\tau~(\mu / \sigma)$ & $D_{\tau}~(\mu / \sigma)$ & $\sigma_0~(\mu / \sigma)$ \\ 
\midrule
0.8 ($212 \times 8$) & 0.296/0.089 & 0.150/0.020 & 0.016/0.004 \\ 
1.4 ($353 \times 8$) & 0.347/0.063 & 0.262/0.020 & 0.025/0.006 \\
3.3 ($212 \times 6$) & 0.291/0.085 & 0.151/0.031 & 0.019/0.005 \\
5.5 ($353 \times 6$) & 0.340/0.084 & 0.264/0.028 & 0.028/0.009 \\
14.1 ($212 \times 4$)& 0.295/0.074 & 0.137/0.021 & 0.023/0.008 \\
23.6 ($353 \times 4$)& 0.333/0.075 & 0.227/0.033 & 0.039/0.006 \\
70.7 ($212 \times 2$)& 0.252/0.057 & 0.126/0.024 & 0.017/0.006 \\
117.8 ($353 \times 2$)&0.276/0.062 & 0.216/0.032 & 0.024/0.005 \\
\midrule
$(\overline{\mu} / \overline{\sigma})$ &  0.305/0.078 & 0.193/0.059 & 0.024/0.010 \\
\bottomrule
\end{tabular} 
}
\end{table}

\begin{table}
\centering
\caption{Mean ($\mu$) and Standard
deviation ($\sigma$) Summary for $\tau$, $\D_{\tau}$ and $\sigma_0$ for the
G-dataset  \label{tab:gsum}}
\resizebox{\columnwidth}{!}{
\begin{tabular}{lrrrrrr}
\toprule
$\#$ Instruction  &  $\tau~(\mu / \sigma)$ & $D_{\tau}~(\mu / \sigma)$ & $\sigma_0~(\mu / \sigma)$ \\
\midrule
1 --- U-Fast            &  0.167/0.041 &  0.115/0.014 &  0.019/0.006 \\
2 --- Fast            &  0.181/0.039 &  0.111/0.019 &  0.016/0.003 \\
3 --- Balanced   &  0.199/0.046 &  0.103/0.011 &  0.019/0.005 \\
4 --- Precise            &  0.228/0.063 &  0.103/0.009 &  0.021/0.005 \\
5 ---Precise &  0.261/0.081 &  0.096/0.009 &  0.020/0.006 \\
\midrule
$(\overline{\mu} / \overline{\sigma})$ &  0.208/0.065 & 0.106/0.014 & 0.020/0.005 \\
\bottomrule
\end{tabular}
}
\end{table}

\subsection{Effect of $\D{}$ and $\W{}$ on $\tau$}

Values of $\tau$ (see Tab.~\ref{tab:pdsum}) are plotted in the top left panel of Fig.~\ref{fig:taustats}. Results of the regression are given in Table~\ref{tab:tau}, where the estimated mean value/standard deviation, as well as the coefficient of determination, are given for each fixed effect parameter.

\begin{table}[h!tp] \centering 
  \caption{Regression of $\tau$, $\D_{\tau}$ and $\sigma_0$  \label{tab:tau}} 
 \resizebox{\columnwidth}{!}{
\begin{tabular}{lrrrr} 
\toprule
 & $\tau\,(\mu / \sigma)$ & $\D{}_{\tau}\,(\mu / \sigma)$ & $\sigma_0\,(\mu / \sigma)$\\
\midrule
 \W{} & $-$0.667/0.526 & $-$0.256/0.362 & $-$0.012/0.111\\ 
 \D{} & 0.343$^{***}$/0.067  & 0.765$^{***}$/0.046 & 0.079$^{***}$/0.015 \\ 
 \W{}:\D{} & 0.217/1.617 & $-$0.319/1.114 & $-$0.092/0.345 \\ 
 Intercept & 0.225$^{***}$/0.027 & $-$0.014/0.014 & 0.003/0.004 \\ 
  $r^2_m/r^2_c$ & 0.14/0.79 & 0.78/0.82 & 0.28/0.33 \\
\bottomrule
\textit{Note:}  & \multicolumn{2}{r}{$^{*}$p$<$0.1; $^{**}$p$<$0.05; $^{***}$p$<$0.01} \\ 
\end{tabular} 
}
\end{table} 
%


\subsection{Effects of $\D{}$ and $\W{}$ on $D_{\tau}$}
Average values of $D_{\tau}$ are plotted with bars in the top middle panel of Fig.~\ref{fig:taustats}. For comparison purposes, factor level \D{} was also plotted on the top middle panel of Fig.~\ref{fig:taustats} with a thick black line.
Higher levels of \D{} lead to larger $\D{}_{\tau}$:
\begin{itemize}
\item For \D{} = 0.212\,m, $\D{}_{\tau}$ ranges from 0.126\,m ($\id{}=2$) to 0.151~m ($\id{}=6$). The average $\D{}_{\tau}$ over all \id{} levels is 0.141\,m, ($67\%$ of \D{}).
\item For \D{} = 0.353\,m, $\D{}_{\tau}$ ranges from 0.216\,m ($\id{}=2$) to 0.242~m ($\id{}=6$). The average $\D{}_{\tau}$ over all \id{} levels is 0.228\,m ($69\%$ of \D{}).
\end{itemize}

Results of the regression are given Table~\ref{tab:tau}, confirming the effect of \D{} on $\D{}_{\tau}$.

\subsection{Effects of $\D{}$ and $\W{}$ on $\sigma_0$}
Average values of $\sigma_0$ are plotted with bars in the top right panel of Fig.~\ref{fig:taustats}, which shows an increase of $\sigma_0$ with levels of \D{} ($\overline{\sigma}_0 = 0.019\,m$ for $\D{} = 0.212\,m$ and $\overline{\sigma}_0 = 0.029\,m$ for $\D{} = 0.353\,m$).
This is confirmed by the regression in Table~\ref{tab:tau}. 
The sharp ``drop'' in $\sigma_0$ for $\id = 2$, evident from Fig.~\ref{fig:taustats} could indicate that aimed movement in the $\id = 2$ regime are different from other, more precise movements, in line with Crossman's~\cite[Fig.~1]{crossman1983} analysis for low \id{} movements.

\bigskip

Keeping only factors with a great effect size\footnote{As measured by a coefficient of variation much smaller than 1, where the coefficient of variation for a set of samples with observed mean $\mu$ and standard deviation $\sigma$ is defined as the ratio $\sigma/\mu$. There is no difficulty in choosing the cut-off rate, since coefficients of variation obtained are clear cut above or below 1. \added{Significance testing with p-values gives identical results ($\alpha=0.05$).}} 
the regressions of the PD-dataset (Tab.~\ref{tab:tau}) are condensed to:
\begin{align}
& \tau = 0.225 + 0.343\D{} + \varepsilon ; \label{eq:tau} \\
& \D{}_{\tau} = 0.765\D{} + \varepsilon'; \label{eq:dtau} \\
& \sigma_0 = 0.079\D{} + \varepsilon''. \label{eq:sig0}
\end{align}
where $\varepsilon$, $\varepsilon'$ and $\varepsilon''$ are error terms. This shows that $\tau$, $\D_{\tau}$ and $\sigma_0$ of the first phase of the PVP depend primarily on the level of \D{}.

\subsection{Effect of instructions on $\tau$, $\D{}_{\tau}$ and $\sigma_0$}
The average values of $\tau$, $\D{}_{\tau}$ and $\sigma_0$ for the G-dataset are represented respectively on the left, middle and right bottom panels of Fig.~\ref{fig:taustats} and Table~\ref{tab:taug}. 

\begin{itemize}
\item Effect on $\tau$: Grand mean for $\tau$ is $\overline{\tau} = 0.208\,s$ ($\min = 0.167\,\mbox{s}$ for condition U-Fast ($\#1$), $\max =0.261\,\mbox{s}$ for condition U-Precise ($\#5$).
Results of the one-way RM ANOVA with factor Instruction are given Tab.~\ref{tab:taug}, showing a moderate effect ($\eta^2_g = 0.28$) on $\tau$. Instructions to emphasize Accuracy increase the value of $\tau$.

\item Effect on $\D{}_{\tau}$: grand mean for $\D{}_{\tau}$ is $\overline{\D{}_{\tau}} = 0.106\,\mbox{m}$ ($\min = 0.096\,\mbox{m}$ for condition U-Precise $\#5$, $\max = 0.115\,\mbox{m}$ for condition U-Fast $\#1$).
Results of the one-way RM ANOVA with factor Instruction are given Tab.~\ref{tab:taug}, showing a moderate effect ($\eta^2_g = 0.22$) on $\D{}_{\tau}$.
Instructions to emphasize speed lead to larger values of $\D{}_{\tau}$. 
Target distance \D{} is represented in thick black line for comparison on the bottom middle panel of Fig.~\ref{fig:taustats}.

\item Effect on $\sigma_0$: grand mean for $\sigma_0$ is $\overline{\sigma_0} = 0.02\,\mbox{m}$ ($\min = 0.016\,\mbox{m}$ for condition Fast $\#2$, $\max = 0.021\,\mbox{m}$ for condition Precise $\#4$).
Results of the one-way RM ANOVA with factor Instruction are given Tab.~\ref{tab:taug}, showing a very small effect ($\eta^2_g = 0.08$) on \removed{$\D{}_{\tau}$}\added{$\sigma_0$}.
\removed{Instructions to emphasize speed lead to larger values of $\D{}_{\tau}$.}
We also re-ran the ANOVA by excluding the Fast condition ($p_{GG} = 0.44$), giving $\eta^2_g$ close to null.
\end{itemize}

\begin{table}
\centering

\caption{One-way Repeated Measures ANOVA for effect of Instruction on $\tau$, $\D{}_{\tau}$ and $\sigma_0$. \label{tab:taug}}
\resizebox{\columnwidth}{!}{
\begin{tabular}{lllrr}
\toprule
Observation & $F(df_{GG},df_{GG})$ & $p$ & $\eta^2_g$ & $\varepsilon_{GG}$ \\
\midrule
$\tau$  & $F(1.7, 25.48) = 32.4$ & $p_{GG} <  10^{-6}$ & 0.28 & 0.47 \\ 
\midrule
$\D{}_{\tau}$  & $F(2.02, 30.33) = 5.8$ & $p_{GG} = 0.007$ & 0.22 & 0.51 \\ 
\midrule
$\sigma_0$  & $F(2.44, 36.67) = 2.48$ & $p_{GG} = 0.087$ & 0.08 & 0.61 \\ 
\bottomrule
\end{tabular} 
}

\end{table}



\subsection{Discussion}

\paragraph{Variations on $\tau$ can be considered negligible for precise aimed movements.}
\D{} being the primary factor that affects $\tau$, it is informative to compute the actual variations induced by \D{} on $\tau$: For $\D{} = 0.212$\,m, $\tau= 0.298$\,ms, while for $\D{} = 0.353$\,m, $\tau = 0.346$\,ms. 
The two levels of \D{} thus induce about $50\,ms$ variation in $\tau$, which is less than the standard deviation associated with any one condition. 
In comparison, the average \mt{} ranges from about 0.5\,s to 1.5\,s between conditions. Hence, at the level of change induced in \mt{}, variations in $\tau$ are relatively negligible ($5\%$). In other words, the variations in $\tau$ account very little in  explaining the variations in observed \mt{}. 
This comparison is reflected in the top left panel of Fig.~\ref{fig:taustats}, where variations in $\tau$ are hardly visible at the scale of \mt{} (black diamonds).

In the case of the effect of instruction, we similarly observe that variations of $\tau$ are small compared to variations of total \mt{}, see bottom left panel of Fig.~\ref{fig:taustats}, even though the effect of instruction is moderate on $\tau$ ($\eta^2_g = 0.28$). The difference in average $\tau$ it at most about $100\,ms$ i.e. about $10\%$ of the average total movement time. If one excludes the two extreme conditions, under which humans rarely perform in daily life, the difference in $\tau$ comes down to about $5\%$ of the average total movement time, similarly to the PD-dataset.
This suggests that $\tau$ can be considered constant leading to simpler results for a comparatively small loss in modeling power in most practical cases.

The coefficients of determination computed for the mixed effect model on $\tau$ ($r^2_m = 0.14$ and $r^2_c = 0.79$) indicate that most of the variability of $\tau$ is actually due to differences between participants. 
This observation implies that if one is willing to predict $\tau$, models linking $\tau$ to participant characteristics need to be conceived. Conversely, if one if willing to correlate participants characteristics to PVP parameters, $\tau$ should be a good candidate (e.g. older participants can be hypothesized to have, say, a larger $\tau$).

\paragraph{On $\D_{\tau}$ and the isochrony of movements}
The finding that $D_{\tau}$ scales with \D{} is not surprising: Assuming a constant $\tau$, an increase in $D_{\tau}$ translates to an increase in average speed ($\D{}_{\tau}/\tau$). The observation that average speed increases with \D{} is in line with the so-called isochrony of movements~\cite{guiard2009}.
The findings of the G-dataset show that $\D_{\tau}$ (and thus speed) increases with the instruction to emphasize speed (from 0.35\,m/s for $\#5$ U-precise to 0.7\,m/s for $\#1$ U-fast), in agreement with the experimental protocol.
 
In both datasets, the average value of $D_{\tau}$ had the same ratio to \D{}, about $70\%$ ($67\%$ and $69\%$ for two levels of \D{} in the PD-dataset, $71\%$ on average for the G-dataset). Furthermore, most of the variability of $\D_{\tau}$ is captured by the fixed effect model ($r^2_m = 0.78$), indicating low participant differences in $\D{}_{\tau}$. 
It would thus be interesting to test in further work whether this proportion of about $70\%$ continues to hold for larger variations of \D{}, including very short and very long movements.

\section{Synthesis of the two phases: Recovering Fitts' Law}
\label{sec:fitts}
A combination of the two phases provides a full description of the variability of aimed movements. For consistency, it should to both versions of Fitts' law~\eqref{eq:fitts} and~\eqref{eq:fitts_mackenzie}, when applied in Fitts' paradigm.
We intend to show that $\Omega$ and $\widehat{\sigma}_{\infty}$ as defined in Section\S~\ref{sec:empirical_two} (see also see Fig.~\ref{fig:pvps}) can respectively be mapped to \mt{} and $\sigma$ of Fitts' law.

\subsection{Synthesis of the two phases}
Movement duration $\widetilde{\Omega}$ is obtained as the sum of the durations of the two phases:
\begin{align}
\widetilde{\Omega} = \tau + \frac{1}{C} \log_2 \left(\frac{\sigma_0}{\sigma_{\infty}}\right). \label{eq:omega_one}
\end{align}
In \S~\ref{sec:pvp_results}, it was shown that $\sigma_0 \simeq \frac{0.079\,\D{}}{\sigma_{\infty}}$ for the PD-dataset, giving
\begin{align}
\widetilde{\Omega} = \tau + \frac{1}{C} \log_2 \left( \frac{0.079\,\D{}}{\sigma_{\infty}} \right),
\end{align}
and playing with constants, one has
\begin{align}
\widetilde{\Omega} & =  \tau + \frac{1}{C} \log_2 (0.079 \times 4.133)  + \frac{1}{C} \log_2 \left( \frac{\D{}}{4.133\sigma_{\infty}} \right) \\
\widetilde{\Omega} & =  \tau' + \frac{1}{C} \log_2 \left( \frac{\D{}}{4.133\sigma_{\infty}} \right)
\label{eq:fitts_omega_we}
\end{align}
giving an expression close to~\eqref{eq:fitts_mackenzie}.

To obtain the so-called nominal Fitts' law~\eqref{eq:fitts_mackenzie}, we use the Gaussianity of the $\widehat{\mathbf{A}}_i$s, and link the miss rate $\varepsilon$ to the standard deviation of endpoints $\sigma_{\infty} = [2\sqrt{2}\mathrm{erf}^{-1}(1-\varepsilon)]^{-1} W$, where $\mathrm{erf}^{-1}(x)$ is the inverse Gaussian Error Function, and plug this in~\eqref{eq:omega_one} to obtain $\breve{\Omega}$
\begin{align}
\breve{\Omega} & =  \tau + \frac{1}{C} \log_2 \left( 0.079\times 2\sqrt{2}\mathrm{erf}^{-1}(1-\varepsilon) \right) + \frac{1}{C} \log_2 \left(\frac{\D{}}{\W{}}\right)\\
\breve{\Omega} & =  \tau''(\varepsilon)  + \frac{1}{C} \log_2 \left(\frac{\D{}}{\W{}}\right) \label{eq:fitts_omega_w} 
\end{align}

Miss rates $\varepsilon$ are usually small and the standardized methodology considers a constant $4\%$ miss rate. \footnote{In fact, the standard methodology advocated in ISO~\cite{iso9241} uses a constant miss rate of $3.88\%$ (see~\cite{soukoreff2004}, but see also~\cite{gori2018} for a critique), in which case $ 2\sqrt{2}\mathrm{erf}^{-1}(1-\varepsilon) = 4.133$.}

The obtained equations for movement time~\eqref{eq:fitts_omega_w} and~\eqref{eq:fitts_omega_we} are consistent with the nominal and effective Fitts' law~\eqref{eq:fitts} and~\eqref{eq:fitts_mackenzie}, except for the ``$+1$'' term that is missing in formulations for $\widetilde{\Omega}$ and $\breve{\Omega}$. 
This term has been discussed several times in the literature~\cite{soukoreff2004, hoffmann2013} and is actually of little interest --- the changes it induces are sensible only for very low values of the ratio \D{}/\W{}, where Fitts' law is known to be a poor model~\cite{crossman1983}, and where the estimation of $C$ is unreliable anyway.

We evaluated Fitts' law on the set of extracted $\Omega$ and $\widehat{\sigma}_{\infty}$, see Fig.~\ref{fig:glob_fitts}. 
Parameters obtained this way should be compared to those of $\widetilde{\Omega}$ and $\breve{\Omega}$, evaluated from \eqref{eq:fitts_omega_we} and~\eqref{eq:fitts_omega_w} for the G-dataset as well as the PD-dataset.

\begin{figure}
\centering
\makebox[\columnwidth]{
\includegraphics[width=.5\columnwidth]{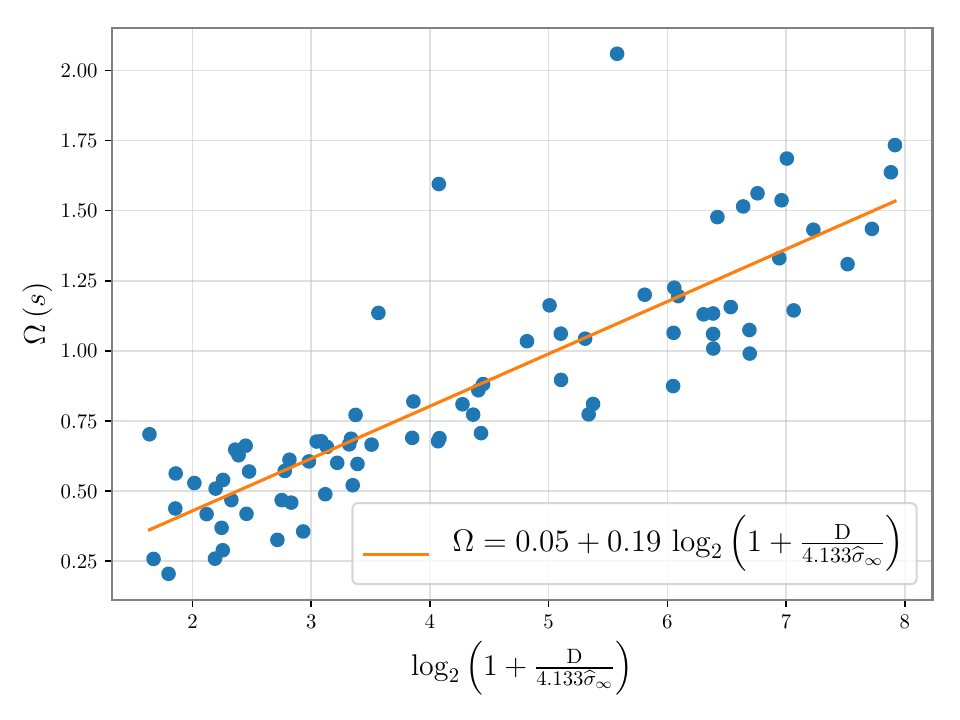} 
\includegraphics[width=.5\columnwidth]{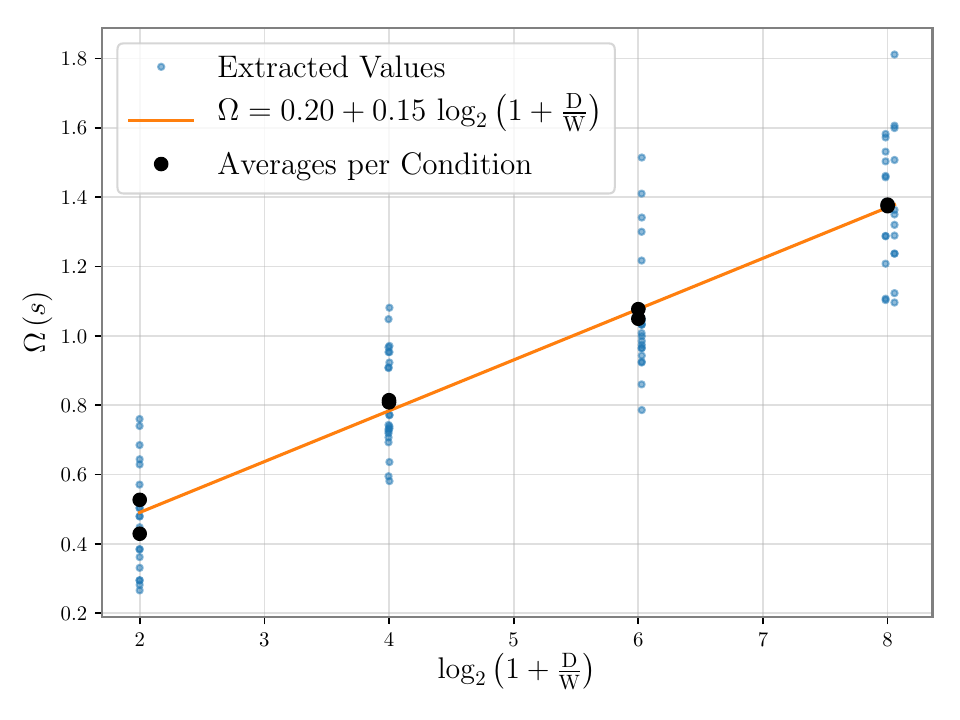} 
}
\caption{Fitts' law evaluated for extracted values of $\Omega$. Left panel: results of Fitts' law~\eqref{eq:fitts_mackenzie} fitted on data of the G-dataset. Right panel: results of Fitts' law~\eqref{eq:fitts} fitted on data of the PD-dataset. The black dots are the per-condition averages, as typically used for regression in most Fitts' law studies (see~\cite{soukoreff2004}). \label{fig:glob_fitts}}
\end{figure}

\paragraph{PD-dataset}
The overall miss rate is $\varepsilon = 2.86\%$. Using the average value of $C = 6.3$ (Table~\ref{tab:pdsum}), one has

\begin{align}
\breve{\Omega}  = 0.07 + 0.16 \log_2 \left(\frac{\D{}}{\W{}}\right).
\end{align}

For comparison, the fitted Fitts' law using the measured $\Omega$ gives $\Omega = 0.2 + 0.15\,\id{}$, see the left panel of Fig.~\ref{fig:glob_fitts} ($r^2_e = 0.81,~r^2 = 99$)\footnote{$r^2_e$ and $r^2$ have been defined before in this paper.}.

\paragraph{G-dataset}
With the average value of $C =4.89$, one has

\begin{align}
\widetilde{\Omega}  = -0.12 + 0.20 \log_2 \left( \frac{\D{}}{4.133\sigma_{\infty}} \right).
\end{align}

For comparison, the fitted Fitts' law using $\Omega$ and $\widehat{\sigma}_{\infty}$ gives $\Omega = 0.05 + 0.19\,\id{}$ ($r^2 = 0.74$), see the right panel of Fig.~\ref{fig:glob_fitts} .

It thus appears that, while the intercept is a little bit underestimated, the estimation of the slope is in line with empirical data. A more precise estimate for the intercept could be obtained by using a more refined model for the first phase. However, this would require many more empirical investigations, and possibly a model of some sort, which is out of the scope of this paper. A second point is that the spline does not factor in the observation that around the maximum variance point, the profiles are smooth, which further induces a bias in the intercept calculation. Furthermore, the most useful parameter in Fitts' law is the slope and not the intercept~\cite{zhai2004a}.

\subsection{Traditional Fitts' law evaluation}
A comparison with Fitts' law parameters \eqref{eq:fitts_empiric_g} and~\eqref{eq:fitts_empiric_pd} obtained only from endpoints, with the traditional methodology confirms the consistency of our novel method.

\paragraph{PD-Dataset} Grand mean for movement time across all conditions was $\overline{\mt{}} = 0.96$\,s. Fitts' law parameters~\eqref{eq:fitts} were estimated via simple linear regression $(r^2_e = 0.742, \quad r^2 = 0.989)$
\begin{align}
\mt{} = 0.13 + 0.17\,\log_2 \bigl(1 + \tfrac{\D{}}{\W{}} \bigr), \label{eq:fitts_empiric_pd}
\end{align}
where $r^2_e$ is the coefficient of determination obtained from all \mt{}s, and $r^2$ the coefficient of determination obtained after averaging \mt{} for each condition.\footnote{The $r^2$ is included for comparison purposes with figures from the literature--- most Fitts' law studies average data prior to regression and usually only $r^2$'s are given.}

\paragraph{G-Dataset} Grand mean for movement time across all conditions was $\overline{\mt{}} = 1.04$\,s. Fitts' law parameters were obtained by simple linear regression $(r^2_e = 0.69)$\footnote{The averaged coefficient of determination $r^2$ cannot be computed here as there is a different level of $\id{}_e$ for each block, even between blocks performed in the same condition.} on~\eqref{eq:fitts_mackenzie} 
\begin{align}
\mt{} = 0.28 + 0.17\,\log_2 \bigl(1 + \tfrac{\D{}}{4.133 \sigma} \bigr). \label{eq:fitts_empiric_g}
\end{align}

As in the previous case, slopes are accurately predicted and intercepts are a bit underestimated, showing the validity of our method compared to the traditional approach.

\section{Conclusion}
\label{sec:conclusion}
The outcome of our theoretical model, Formula~\eqref{eq:logscale}, is a large improvement over the existing quantitative descriptions of the speed-accuracy tradeoff. It relates accuracy to speed \emph{throughout} the movement, and actually results from a joint minimization of speed and accuracy (see Theorem~1 and~\eqref{eq:opti_equal}).
This is in contrast to e.g. Fitts' law that can only describe endpoints, or optimal control models for goal-directed, which often require the time horizon to be preselected~\cite{todorov1998} (but see \cite{harris1998,guigon2008,berret2016,tanaka2006}) and which function with various cost functions~\cite{todorov1998, qian2013, flash1985} that can be considered ad-hoc.
It also has a direct operational implication: the variance decreases exponentially at a constant rate $C$ during the second phase, and this is what largely predicts the movement time.

As discussed by Tanaka \etal{}~\cite{tanaka2006}, two movement expressed as the solutions of an optimization problem where 1) the duration of movement is fixed and the variability is minimized, or 2) the variability is fixed and the duration of movement is minimized are not \textit{a priori} equivalent\footnote{Interestingly, these two partial optimizations correspond to the two different empirical paradigms of Schmidt and Fitts, see e.g.~\cite{plamondon1997}}. The minimization of mutual information that we propose in this work, jointly optimizing for speed and accuracy, actually makes that equivalence.

The functions $\mathbf{f}$ and $\mathbf{g}$ were not defined beforehand, but left unspecified, to be determined as part of the optimization problem.
It turns out that the resulting optimal scheme is linear, surprisingly simple and does not require memory beyond the obvious fact that the limb maintains its position in the steady state. This finding provides support for the many linear schemes found in the motor control literature.

As a final observation,\removed{ it is worth mentioning the historical satisfaction that comes from deriving Fitts' law from an information-theoretic scheme:} \added{ let us recall that} originally Fitts' law was conceived via a vague analogy with Shannon's capacity formula~\cite{shannon1948}, since then deemed flawed (see e.g.~\cite{gori2018} for a brief history).
The information-theoretic framework appears naturally here to solve the aiming problem, upholding that information theory can be a useful tool to model human performance~\cite{chan1990a,sheridan1974}.
In addition, the relevant signal never actually needs to be transformed into the vague notion of \id{}~\cite{guiard2011b}, measured in bits; mutual information is here expressed as a tangible ratio of two spatial variances.

A final advantage comes from the method of PVPs described here. Segmenting movements correctly is notoriously hard~\cite{teasdale1993}. Specifically, determining when a movement (or submovement) ends is particularly difficult, since movements performed with high levels of accuracy usually end very smoothly.\footnote{In fact a movement never truly ends, as keeping the position stationary over time requires a control of some sort.} The method of PVPs dispenses the experimenter of the complicated task of precisely determining stopping times. It also handles trajectories indifferently of the associated control, be it discrete continuous or intermittent.

\bigskip

This work provides a new theoretical model for voluntary movements, based on the study of positional variance profiles (PVPs), which takes into account: (a)~the variability of human produced movements; (b)~a feedback mechanism, essential for reliable aiming; and (c)~intermittent control that becomes continuous at the limit.  
Empirical evidence shows that multi-joint goal-directed movements lead to unimodal PVPs: a first variance-increasing phase of approximately constant duration is followed by a variance-decreasing second phase that lasts until an appropriate accuracy level is reached. We established that:
	\begin{enumerate}
		\item The problem of aiming, during the second component, can be reduced to that of transmitting information from a source (position at the end of the first component) to a destination (limb extremity, cursor) over a channel perturbed by Gaussian noise with the presence of a noiseless feedback link. 
		\item  Using an optimal scheme, in the sense that the transmitted information from source to destination is maximized at each step, we showed that positional variance can decrease at best exponentially during the second phase, as summarized by~\eqref{eq:logscale}, at a constant rate $C$.
	\end{enumerate}
	
Beyond the theoretical implications of this work, we believe that our model can have a direct impact on current practices in the evaluation of pointing devices in HCI.
Indeed, this work suggests an alternative to Fitts' law for the evaluation of pointing devices, by:
\begin{enumerate}
\item instructing users to point exactly on a line as fast as they can (i.e. the case when $C$ is the most reliably estimated),
\item computing PVPs and estimating $C$.
\end{enumerate}
$C$ being constant accross speed-accuracy instructions and levels of \D{} and \W{}, there is no need for repetitive measurements over several \D{} and \W{} conditions.
This new method would have several practical advantages over Fitts' law style evaluation:
\begin{itemize}
\item the model provides a description throughout most of the trajectory, whereas Fitts' law can only inform about endpoints, allowing finer analyses (e.g. rather than asking if one group is faster than the other, one can for investigate whether some group composed of a special population displays a reduction in variance that is subexponential);
\item the evaluation of pointing performance with PVPs does not require a predefined width as in Fitts' task; allowing evaluation in new cases (e.g. for users with strong motor impairment, a predefined width is unfeasible~\cite{davies2014});
\item there is no need to determine the (difficultly tractable) stoppage time of a movement; this point may be particularly useful with motor-impaired populations;
\item since the slope can be estimated from a single experimental condition, using PVPs to estimate Fitts' law parameters saves the experimenter two factors (\D{} and \W{}), saving valuable time. More experimental and methodological work needs to be conducted however to understand how variability in the estimation of $C$ can be reduced.
\end{itemize}
Further practical work should determine whether this method is indeed applicable, particularly whether the accuracy of the estimation of $C$ can be further enhanced.

It would also be interesting to observe the effect on the kinematics if one modified some part of the scheme. For example, what would happen were the feedback non-ideal? Can this serve to model the movement of some motor impaired patients e.g. for pathologies which have afflicted the peripheral nervous system? 
Similarly, what would happen if there were no scaling $\alpha_i$'s or if they were poorly ``chosen''? Could this serve to model tremors (e.g. with $\alpha_i$'s much too large)?
\appendix

\section{Appendix}

\subsection{Segmentation Algorithm}\label{app:segment}
The algorithm works in the following steps (using the pre-processed time series as presented in \S~\ref{sec:method}):
\begin{enumerate}
\item Identify time instants $\lbrace{ t_{0,i} \rbrace}_{i= 1}^n$ when position crosses half of the distance between start and target while maintaining a positive speed, thereby identifying $n$ movements\footnote{In the case of the  reciprocal paradigm, this amounts to removing all movements going right to left and keeping only movements going left to right. The other half of movements can be retrieved by inverting the trajectory and performing the exact same operations. In this paper we only keep movements going left to right, to eliminate potential differences between left to right and right to left movements which we are not concerned with.};
\item Compute the velocity profile (from the position profile), normalize it with respect to its maximum value, 
to determine the start of each movement via thresholding (go back in time from $t_{0,i}$ until the normalized velocity reaches, say, $1\%$);
\item Look for ``dwell periods'' after $t_{0,i}$, i.e., intervals where the absolute value of the normalized velocity is below $1\%$ and when the current position is above the one obtained at $t_{0,i}$. The latest instant of the last dwell period is the end of the movement\footnote{Right after this instant, either a new movement begins (in a reciprocal task), or the cursor is moved back to the start position (in a discrete task).}.
\end{enumerate}

\added{\subsection{$\tau$ box-plot for each participant \label{app:boxplot}}
The Box plot for $\tau$ for each participant is displayed Fig.~\ref{fig:boxplot}.
\begin{figure}[H]
\centering
\includegraphics[width = .8\columnwidth, trim = {0, 1.5cm, 0, 2cm}, clip]{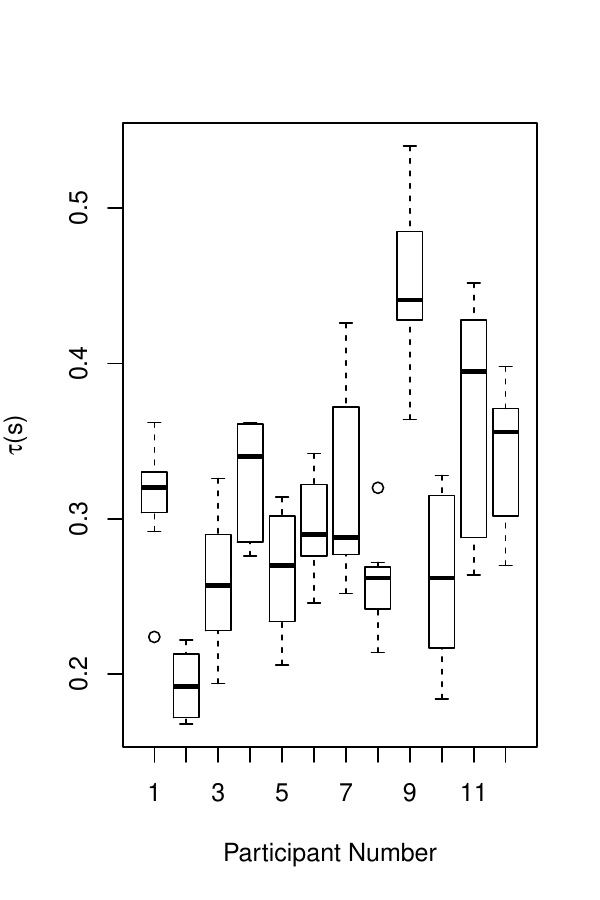} 
\caption{$\tau$ (s) plotted for each participant of the PD-dataset \label{fig:boxplot}}
\end{figure}
}
\subsection{Mathematical Proofs}\label{app:proofs}
In the proofs, we use the natural logarithm ($\log$) rather than its base-2 form ($\log_2$), for simplicity and consistency with information-theoretic textbooks, and is equivalent barring the change of units from bits ($\log_2$) to nats ($\log$)~\cite{cover2012}.
\begin{proof}[Theorem~\ref{thm:ineq}]
We use well-known ingredients from information-theory~\cite{cover2012}. For inequality (\ref{eq:opti_inequal}a):
\begin{align}
I(\mathbf{A};\widehat{\mathbf{A}}_n) &= H(\mathbf{A}) - H(\mathbf{A} | \widehat{\mathbf{A}}_n) \label{eq:1.1} \\
	&= H(\mathbf{A}) - H(\mathbf{A}-\widehat{\mathbf{A}}_n | \widehat{\mathbf{A}}_n) \label{eq:1.2}\\
	& \geq H(\mathbf{A}) - H(\mathbf{A}-\widehat{\mathbf{A}}_n) \label{eq:1.3}\\
	& \geq H(\mathbf{A}) - \frac{1}{2}\log \left( 2 \pi e \mathbb{E}[ (\mathbf{A} - \widehat{\mathbf{A}}_n)^2]\right) \label{eq:1.4}\\
	&= \frac{1}{2}\log \frac{\sigma^2_0}{D_n},\label{eq:1.5}
\end{align}
where~\eqref{eq:1.1} is by definition of mutual information; \eqref{eq:1.2} because of the conditioning by $\widehat{\mathbf{A}}_n$; \eqref{eq:1.3} because conditioning reduces entropy; \eqref{eq:1.4} because the Gaussian distribution maximizes entropy under power constraints and by the entropy formula for a Gaussian distribution; \eqref{eq:1.5} by definition of the distortion and the entropy formula for a Gaussian distribution.

\noindent For inequality (\ref{eq:opti_inequal}b):
\begin{align}
I(\mathbf{A};\widehat{\mathbf{A}}_n) & \leq I(\mathbf{A};\mathbf{Y}^n) \label{eq:2.1} \\
	& = H(\mathbf{Y}^n) - H(\mathbf{Y}^n|\mathbf{A})  \label{eq:2.2} \\
	& = \sum\nolimits_{i=1}^n \left[ H(\mathbf{Y}_i|\mathbf{Y}^{i-1}) - H(\mathbf{Y}_i|\mathbf{Y}^{i-1},\mathbf{A}) \right] \label{eq:2.3} \\
	& = \sum \left[ H(\mathbf{Y}_i|\mathbf{Y}^{i-1}) - H(\mathbf{Y}_i|\mathbf{X}_i) \right] \label{eq:2.4} \\
	& \leq \sum \left[ H(\mathbf{Y}_i) - H(\mathbf{Z}_i)\right] \label{eq:2.5} \\
	& \leq \sum \left[ \frac{1}{2} \log (2\pi e (P_i + N)) - \frac{1}{2} \log (2\pi e N)\right] \label{eq:2.6} \\
	& \leq \sum\nolimits_{i=1}^n \left[ \frac{1}{2} \log ( 1 + P_i/N) \right] \leq nC \label{eq:2.7}	
\end{align}
where~\eqref{eq:2.1} is by the data processing inequality~\cite{cover2012} applied the Markov chain  $\mathbf{A}~\longrightarrow~\mathbf{Y}^i~\longrightarrow~\mathbf{g}(\mathbf{Y}^i) = \widehat{\mathbf{A}}^i$; \eqref{eq:2.2} by definition; \eqref{eq:2.3} by applying the chain rule~\cite{cover2012} to both terms; \eqref{eq:2.4} by design of the feedback scheme; \eqref{eq:2.5} because conditioning reduces entropy for the first term a,d by virtue of the AWGN model for the second term; \eqref{eq:2.6} because the Gaussian distribution maximizes entropy and $\mathbf{X}_i$ and $\mathbf{Z}_i$ are independent; \eqref{eq:2.7} by the concavity of the logarithm function.
\end{proof}

\begin{proof}[Lemma~\ref{lm:Cond}]
The proof consists of finding the conditions that make the inequalities in the proof of Theorem~\ref{thm:ineq} equalities.
Equality in~\eqref{eq:1.3} is equivalent to condition~\ref{enum:cn:3_bis}); Equality in~\eqref{eq:1.4} is equivalent to $\mathbf{A} - \widehat{\mathbf{A}}^i$ Gaussian; Equality in~\eqref{eq:2.1} is equivalent to $H(\mathbf{A}|\mathbf{Y}^i) = H(\mathbf{A}|\mathbf{Y}^i,\mathbf{g}(\mathbf{Y}^i)) = H(\mathbf{A}|\mathbf{g}(\mathbf{Y}^i))$, so that $\mathbf{Y}^i~\longrightarrow~\mathbf{g}(\mathbf{Y}^i)~\longrightarrow~\mathbf{A}$ form a Markov chain~\cite{cover2012}, leading to condition~\ref{enum:cn:4}); Equality in~\eqref{eq:2.5} is equivalent to condition~\ref{enum:cn:3}); Equality in~\eqref{eq:2.6} means the $Y_i$'s are Gaussian; Equality in~\eqref{eq:2.7} leads to condition~\ref{enum:cn:2}) by concavity of the logarithm. Finally, $\mathbf{X}_i$ is Gaussian as the result of the sum of two Gaussians $\mathbf{Y}_i$ and $\mathbf{Z}_i$, and so is $\widehat{\mathbf{A}}_i$ as the sum of $\mathbf{A}$ and $\mathbf{A} - \widehat{\mathbf{A}}_i$. This finally yields condition~\ref{enum:cn:1}).
\end{proof}

\begin{proof}[Theorem~\ref{thm:f}]
We start by considering $\mathbf{X}_i = \mathbf{f}(\mathbf{Y}^{i-1}, \mathbf{A})$, which should be independent of $\mathbf{Y}_{i-1},~~\forall i$ by condition~\ref{enum:cn:3_bis}) of Lemma~\ref{lm:Cond}. This implies the decorrelation 
\begin{align}
\mathbb{E}[\mathbf{f}(\mathbf{Y}^{i-1}, \mathbf{A})(\mathbf{Y}_{i-1})] = 0,~~\forall i. \label{eq:decor}
\end{align}
Since $\mathbf{X}_i$ is a function of two Gaussians $\mathbf{A}$ and $\mathbf{Y}^{i-1}$, the conditional expectation $\mathbf{X}_i = \mathbf{E}[\mathbf{X}_i|\mathbf{A}, \mathbf{Y}^{i-1}]$ is linear, hence $\mathbf{X}_i = \alpha_i (\mathbf{A} - \tilde{\mathbf{f}}(\mathbf{Y}^{i-1}))$. Plugging this in~\eqref{eq:decor} makes for a direct application of the orthogonality principle, showing that $\tilde{\mathbf{f}} = \mathbb{E}[\mathbf{A}|\mathbf{Y}^{i-1}] = \mathbf{g}(\mathbf{Y}^{i-1})$.
\end{proof}

\begin{proof}[Theorem~\ref{thm:incr}]
The goal of the proof is to evaluate $\mathbb{E}[\mathbf{A}|\mathbf{Y}^{i-1}]$.
We first use the operational formula from the orthogonality principle $\mathbb{E}[\mathbf{A}|\mathbf{Y}^{i-1}] = \mathbb{E}[\mathbf{A}(\mathbf{Y}^{i-1})^t] \mathbb{E}[\mathbf{Y}^{i-1}(\mathbf{Y}^{i-1})^t]^{-1}\mathbf{Y}^{i-1}$.
Because the channel outputs are independent (conditions~\ref{enum:cn:3}) and~\ref{enum:cn:4}) from Lemma~\ref{lm:Cond}) and input powers are identical (conditions~\ref{enum:cn:2} from Lemma~\ref{lm:Cond}), $\mathbb{E}[\mathbf{Y}^{i-1}(\mathbf{Y}^{i-1})^t]^{-1} = (P+N)^{-1} \mathbb{I}$, where $\mathbb{I}$ is the identity matrix of size $i-1$.
Then, let $\mathbf{A}_i = \mathbf{X}_i/\alpha_i$ be the unscaled version of $\mathbf{X}_i$ and notice that $\mathbf{A} - \mathbf{A}_i = \mathbf{g}(\mathbf{Y}^{i-1})$. As the channel outputs are independent, we immediately have that $\mathbb{E}[(\mathbf{A} - \mathbf{A}_i)Y_i] = 0$, hence $\mathbb{E}[\mathbf{A}Y_i] = \mathbb{E}[\mathbf{A}_iY_i]$.

Combining both results, we get 
\begin{align}
\mathbb{E}[\mathbf{A}|\mathbf{Y}^{i-1}] &= (P+N)^{-1}\mathbb{E}[\mathbf{A}(\mathbf{Y}^{i-1})^t] \mathbb{I}\mathbf{Y}^{i-1} \\
& = (P+N)^{-1} \sum_{j=1}^{i-1} \mathbb{E}[\mathbf{A}\mathbf{Y}_j]\mathbf{Y}_j \\
&=  (P+N)^{-1} \sum_{j=1}^{i-1} \mathbb{E}[\mathbf{A}_j\mathbf{Y}_j]\mathbf{Y}_j \\
&= \sum_{j=1}^{i-1} \mathbb{E}[\mathbf{A}_j|\mathbf{Y}_j]
\end{align}
where 
\begin{align}
\mathbb{E}[\mathbf{A}_i|\mathbf{Y}_i] & = (P+N)^{-1}\mathbb{E}[\mathbf{A}_i\mathbf{Y}_i]\mathbf{Y}_i \\
&= (P+N)^{-1} \mathbb{E}[\mathbf{X}_i/\alpha_i \cdot{} (\mathbf{X}_i + \mathbf{Z}_i)]\mathbf{Y}_i \\
&= \frac{1}{\alpha_i}\frac{P}{P+N} \mathbf{Y}_i
 \label{eq:espcond}
\end{align}
\end{proof}

\begin{proof}[Theorem~\ref{thm:dist}] 
First notice that we can write $D_{i}$ as 
\begin{align}
D_{i} & =  \mathbb{E}[(\mathbf{A} - \mathbb{E}[\mathbf{A}|\mathbf{Y}^i])^2] \\
 &= \mathbb{E}[(\mathbf{A} - (\mathbb{E}[\mathbf{A}|\mathbf{Y}^{i-1}] + \mathbb{E}[\mathbf{A}|\mathbf{Y}_i]))^2] \\
 &= D_{i-1} - \mathbb{E}[(\mathbb{E}[\mathbf{A}|\mathbf{Y}_i])^2]
\end{align}
Using~\eqref{eq:espcond}, one has
\begin{align*}
D_{i} = D_{i-1} - \frac{1}{\alpha_i^2} \frac{P^2}{P+N}.
\end{align*}
Finally, notice that $D_{i-1} = \mathbb{E}[(\mathbf{A} - \mathbb{E}[\mathbf{A}|\mathbf{Y}^{i-1}])^2] = \mathbb{E}[(\mathbf{A}_i)^2] = P/\alpha_i^2$, to see that
\begin{align*}
D_{i} = D_{i-1} \left( 1 - \frac{P}{P+N} \right) = \frac{D_{i-1}}{1 + P/N}
\end{align*}
The closed form for the distortion is obtained by applying this equation recursively, starting from $D_0 = \mathbb{E}[\mathbf{A}^2] = \sigma^2_0$:

\begin{align}
D_{i} = \frac{\sigma_0^2}{(1 + P/N)^{i}}.
\end{align}
Finally, we evaluate $\alpha_i$ (with $\alpha_0 = \frac{\sqrt{P}}{\sigma_0}$)
:
\begin{align}
\alpha_i = \sqrt{\frac{P}{D_i}} = \frac{\sqrt{P}}{\sigma_0} (1 +P/N)^{i/2} = \alpha_0 (1 +P/N)^{i/2}.
\end{align}
\end{proof}


%
%

\bibliographystyle{plain}
\bibliography{smc}


\end{document}